\newif\ifnotes
\notesfalse

\documentclass[11pt]{article}
\usepackage{fullpage} 

\usepackage{xspace,amsmath,amsthm,amsfonts,amssymb,xcolor,afterpage,framed}
\usepackage{enumerate}
\usepackage{units} 
\usepackage{algpseudocode}
\usepackage{algorithm}
\usepackage{cancel}

\newcommand{\ignore}[1]{}


\definecolor{darkred}{rgb}{0.5, 0, 0}
\definecolor{darkgreen}{rgb}{0, 0.5, 0}
\definecolor{darkblue}{rgb}{0,0,0.5}

\newlength{\saveparindent}
\setlength{\saveparindent}{\parindent}
\newlength{\saveparskip}
\setlength{\saveparskip}{\parskip}

\newcounter{ctr}

\newcounter{ectr}


%
%

\renewcommand{\to}{\ensuremath{\rightarrow}}

\newcommand{\eps}{\ensuremath{\epsilon}}

\newcommand{\ceil}[1]{\ensuremath{\lceil{#1}\rceil}}
\newcommand{\floor}[1]{\ensuremath{\lfloor{#1}\rfloor}}

\newcommand{\abs}[1]{\ensuremath{\lvert{#1}\rvert}}

\newcommand{\set}[1]{\ensuremath{\{#1\}}}

\newcommand{\adv}{\ensuremath{\mathcal{A}}\xspace}

\def\zo{\{0,1\}}

\newcommand{\poly}{\ensuremath{\mathrm{poly}}\xspace}


\def\to{\rightarrow}





\newtheorem{theorem}{Theorem}[section]
\newtheorem{definition}[theorem]{Definition}
\newtheorem{claim}[theorem]{Claim}
\newtheorem{remark}[theorem]{Remark}

\newtheorem{lemma}[theorem]{Lemma}

\newtheorem{assumption}[theorem]{Assumption}


\newcommand{\bigtodo}[1]{%
    \begin{center}
    \fcolorbox{red!50!black}{red!20}{%
        \parbox{0.8\linewidth}{#1}%
    }
    \end{center}
}


\addtolength{\textfloatsep}{-8pt}
\addtolength{\abovecaptionskip}{-4pt}
\addtolength{\belowcaptionskip}{-4pt}



\newlength{\protowidth}


\newcommand{\namedref}[2]{\hyperref[#2]{#1~\ref*{#2}}}

\newcommand{\subfigureref}[2]{\hyperref[#1]{Figure~\ref*{#1}#2}}

\definecolor{darkred}{rgb}{0.5, 0, 0}
\definecolor{darkgreen}{rgb}{0, 0.5, 0}
\definecolor{darkblue}{rgb}{0,0,0.5}



\def\bprot{\pi_{\text{beacon}}}
\def\mc{\mathcal}
\def\chain{\mathcal{C}}
\newcommand{\euler}{e}
\def\nofork{\mathcal{M}_\text{FL}}
\newcommand{\expect}{\mathrm{E}}
\def\dtv{\Delta}
\newcommand{\pr}{\mathrm{Pr}}
\newtheorem{overview}[theorem]{Overview}
\newtheorem{example}[theorem]{Example}
\newtheorem{observation}[theorem]{Observation}
\newtheorem{interpretation}[theorem]{Interpretation}

\newcommand{\fbeacon}{\ensuremath{\mathcal{F}_{\mathrm{beacon}}}\xspace}
\newcommand{\fchain}{\ensuremath{\mathcal{F}_{\mathrm{chain}}}\xspace}
\newcommand{\pibeacon}{\ensuremath{\pi_{\mathrm beacon}}\xspace}
\newcommand{\bc}{Bitcoin\xspace}
\newcommand{\csp}{\ensuremath{\kappa}}

\newcommand{\extr}{\textsc{ext}}
\newcommand{\hash}{{hash}}

\definecolor{webred}{rgb}{0.5,0,0}
\definecolor{webblue}{rgb}{0,0,0.8}
\PassOptionsToPackage{hyphens}{url}
\usepackage[colorlinks,citecolor=webblue,linkcolor=webred,backref,pagebackref]{hyperref}
\usepackage[hyphenbreaks]{breakurl}

\newcommand{\hs}[1]{\ifnotes{\color{red}{\bf [HS: #1]}}\fi}
\newcommand{\ariel}[1]{\ifnotes{\color{red}{\bf [Ariel: #1]}}\fi}
\newcommand{\iddo}[1]{\ifnotes{\color{red}{\bf [Iddo: #1]}}\fi}
\newcommand{\vassilis}[1]{\ifnotes{\color{red}{\bf [Vassilis: #1]}}\fi}
\newcommand{\aggelos}[1]{\ifnotes{\color{red}{\bf [{\bf Aggelos:} #1]}}\fi}

\ignore{

\newcommand{\hs}[1]{{\color{blue}{\bf [HS: #1]}}}

\newcommand{\vnote}[1]{{\color{darkred}{\bf [VZ: #1]}}}

}

\ignore{
\renewcommand{\iddo}[1]{}
\renewcommand{\hs}[1]{}
\renewcommand{\vassilis}[1]{}
\renewcommand{\aggelos}[1]{}
}

\ignore{%
\setlength{\textwidth}{16.4cm} \setlength{\textheight}{23.6cm}
\setlength{\oddsidemargin}{0.3cm} \setlength{\evensidemargin}{0.3cm}
\setlength{\topmargin}{-.5cm} \setlength{\headheight}{-.1cm}
\setlength{\headsep}{0cm} \setlength{\footskip}{.7cm}

}

\setlength{\textwidth}{16cm} \setlength{\textheight}{23cm}
\setlength{\oddsidemargin}{0.3cm} \setlength{\evensidemargin}{0.3cm}
\setlength{\topmargin}{-.1cm} \setlength{\headheight}{-.1cm}
\setlength{\headsep}{0.3cm} \setlength{\footskip}{.7cm}


\begin{document}

\title{{\sc 
Bitcoin Beacon
}
}

\author{
Iddo Bentov\thanks{Research partly supported by ERC project CODAMODA.}\\
\footnotesize{Technion}\\
\footnotesize{\texttt{idddo@cs.technion.ac.il}}
\and
Ariel Gabizon\\
\footnotesize{Technion}\\
\footnotesize{\texttt{ariel.gabizon@gmail.com}}
\and
David Zuckerman\\
\footnotesize{University of Texas at Austin}\\
\footnotesize{\texttt{diz@cs.utexas.edu}}
}

\ignore{
\author{
Iddo Bentov\\
Technion\\
{\tt idddo@cs.technion.ac.il}
\and
Aggelos Kiayias \\
University of Athens\\
{\tt aggelos@kiayias.com}
\and
Ariel Gabizon\\
Technion\\
{\tt ariel.gabizon@gmail.com}
\and
Hong-Sheng Zhou\\
Virginia Commonwealth University\\
{\tt hszhou@vcu.edu}
\and
Vassilis Zikas\\
ETH\\
{\tt vzikas@inf.ethz.ch}
\and
David Zuckerman\\
University of Texas at Austin\\
{\tt diz@cs.utexas.edu}
}
}
\date{}

\maketitle

\thispagestyle{empty}

\begin{abstract}
We examine a protocol $\bprot$ that outputs unpredictable and publicly verifiable randomness, meaning that the output is unknown at the time that $\bprot$ starts, yet everyone can verify that the output is close to uniform after $\bprot$ terminates. We show that $\bprot$ can be instantiated via \bc under sensible assumptions; in particular we consider an adversary with an arbitrarily large initial budget who may not operate at a loss indefinitely. In case the adversary has an infinite budget, we provide an impossibility result that stems from the similarity between the \bc model and Santha-Vazirani sources. We also give a hybrid protocol that combines trusted parties and a \bc-based beacon.
\end{abstract}

\section{Introduction}

Consider multiple parties who wish to execute a high stake protocol that involves {\em public} randomness. For example, the parties may wish to elect one of them as a leader, in a way that allows anyone (including non-participants in the protocol) to verify that the elected leader did not corrupt the election process by offering bribes to other parties. If the source of the randomness can be tampered with, corrupt parties may try to influence this source in their favor. Due to the high stakes, rational parties may also try to influence the randomness source, even if it is costly to them in the case that they fail.

A reliable source of publicly-verifiable randomness is useful as a basis for many cryptographic primitives. For instance, the design of the \texttt{SHA256} hash function specifies operations such as $y\leftarrow(x \mathsf{\ rotr\ } 6) \oplus (x \mathsf{\ rotr\ } 11) \oplus (x \mathsf{\ rotr\ } 25)$, where $\mathsf{rotr}$ is a circular shift right instruction. 
Moreover, \texttt{SHA256} specifies some particular $2048$ bits for round constants, as well as some particular $256$ bits for an initial state. To increase the public confidence in the scheme, these $2048+256=2304$ bits were derived by invoking simple functions on a series of small prime numbers, since the \texttt{SHA256} designers claim that other choices for these $2304$ bits would be just as good. Still, there exists the possibility that \texttt{SHA256} has a backdoor (cf. \cite{EPRINT:BLN15}) that was conjured by first picking the $2304$ bits for the round constants and the initial state, and then computing constants for the \texttt{SHA256} algorithm (such as $6,11,25$ above) that enable the backdoor. It is of course possible to outsource the selection of these $2304$ bits to a trusted party, but the question of how to publicly verify that the \texttt{SHA256} designers and the trusted party did not collude remains unanswered. If the \texttt{SHA256} designers could publish the algorithm and define the $2304$ bits to be the unpreditable bits that a public beacon has yet to produce, the confidence in \texttt{SHA256} might be even higher.  

To take a more rigorous example, one may consider a provably secure (albeit less efficient) algebraic hash function of the form $H(x,y)=g^x h^y$. Finding a collision $H(x_1,y_1)=H(x_2,y_2)$ is equivalent to breaking a discrete-log hardness assumption (cf. \cite{MAN:Mironov} and \cite[Lecture~4]{MAN:W09}). For this to hold, $g$ and $h$ should be selected as random elements, i.e., without knowing a number $n$ such that $g^n=h$. In groups that are of interest it is easy to sample random elements by tossing public coins (see for example \cite[Section~3.3.4]{MAN:BB02}), and hence the unpreditable bits that a public beacon produces can be used to define the element $h$.

On the other hand, a public beacon is unhelpful in the case of cryptographic primitives that rely on a {\em structured} common reference string (CRS), because private randomness is needed in order to produce the secrets that the structured CRS is comprised of (cf. \cite{C:GO07,EPRINT:BFS16}). Thus, for a protocol that requires a CRS of the form, say, $s=(g^x,g^y,g^{xy})$, a trusted party can use her private randomness to sample $x,y$ and compute $s$, while a public beacon can sample $g^x,g^y$ but will not be able to output $s$ (if computational Diffie-Hellman \cite{DiffieHellman76,ANTS:Boneh98} is intractable). To give some concrete examples, the Zerocash cryptocurrency \cite{SP:Z14} and other constructions that are based on linear PCPs~\cite{CCC:IshaiKO07,TCC:BCIPO13} cannot utilize a public beacon. Many other NIZK constructions require a structured CRS, for example \cite{EC:GS08}. Protocols that sample a structured CRS (e.g., \cite{SP:BCGTV15}) are not publicly verifiable, and may become increasingly susceptible to aborts (a.k.a. denial of service) in conjunction with the number of parties that execute the sampling algorithm. Additionally, let us remark that when the stakes are high the standard definitions do not necessarily capture a real-world setting. For instance, consider $n$ parties that execute $n$ or fewer invocations of a protocol with identifiable aborts~\cite{C:IOZ14}, until an invocation terminates with no aborts by any of the remaining parties. If at least one party is honest then she always contributes her input (i.e., private randomness) to the structured CRS and then destroys her input. However, a single party who remains alone in the final invocation is then more likely to become corrupt (e.g., she could be pressured by the other corrupt parties, or she might be willing to be bribed).

A good randomness source can be useful for an individual user too. Since it is difficult to operate in a completely secure environment, the personal computer of a user could be infected with malware. If the user wishes to run a sensitive process that depends on randomness, she may be concerned about the possibility that malware would interact with this process and feed it non-random bits. Hence, if the process can run a verification algorithm that ensures that the bits originated from a reliable randomness source, the prospects of a successful attack are diminished. 

One possible method for obtaining unpreditable random bits is to use financial data \cite{ClarkH10}. For example, we can derive the output from the least significant bits (LSBs) of the end-of-day price of some particular assets that are traded on a stock exchange. However, it is problematic to formalize the assumptions and quantify the security that this protocol achieves. If the stakes are high enough then it might for example be possible to bribe a person who controls the display of the stock prices, so that she would tweak the LSBs of the end-of-day price of the particular assets. Furthermore, attempting to amplify the security of this protocol is nontrivial: if we wish to derive the beacon output from two distant stock exchanges (say NYSE and SEHK), then a person who controls the display in one stock exchange can wait and see the prices of the other exchange, and then modify the LSBs in the stock exchange that she controls accordingly.

\smallskip
An alternative approach is the NIST beacon \cite{WWW:NISTBEACON,RSA:P14}, though it relies on a trusted party.

\smallskip
In this work we explore whether the Bitcoin model \cite{MAN:n08,EC:GKL15} is a reliable source of public randomness. Clearly, in the absence of an attacker, the entropy that is used to generate fresh secret keys and the assumption that the {\em Proof of Work} hash resembles a random oracle imply that plenty of random bits can be extracted from the Bitcoin ledger. In the presence of an attacker, our results can be summarized as follows:
\begin{enumerate}
\item If the attacker has $p$ fraction of the Bitcoin mining power and she is able to generate Bitcoin blocks at a loss, indefinitely, then it is impossible to extract from the Bitcoin ledger a single bit whose statistical distance from random is smaller than $\frac{1}{12} \cdot p$. 
\item Assuming that the attacker has a limited budget that prevents her from generating blocks at a loss for too long, it is possible to extract bits from the Bitcoin ledger with statistical distance from random that is arbitrarily close to $0$.
\end{enumerate}
For the sake of comparison, consider a weak adversary who never attempts to fork the blockchain. Thus, in this setting, our negative result shows that any beacon protocol is susceptible to bias that is at most $6$ times smaller than the bias of a trivial protocol that extracts the output from a single predetermined block. 
This is because the trivial protocol can simply hope that the honest miners created the predetermined block, and thereby achieve $\frac{1}{2}p$ or smaller statistical distance from random (an adversary who ``resets'' once achieves $\abs{\frac{1}{2}(1-p)+\frac{3}{4}p-\frac{1}{2}}=\frac{1}{4}p$).
Indeed, Santha-Vazirani sources \cite{FOCS:SanVaz84,BLOG:RVW12} exhibit a similar behavior, and we make use of generalized Santha-Vazirani sources \cite{ICALP:BEG15} to derive our lower bound.

In case the attacker has less than $\nicefrac{1}{2}$ of the mining power and her budget is small enough, our positive results require $n=O(\frac{1}{\eps^2}\log^2\frac{1}{\eps})$ Bitcoin blocks to obtain a bit whose statistical distance from random is at most $\eps$. For adversaries with large budgets, our protocols may require an even larger $n$. To compare, the lower bound \cite{STOC:Cleve86} and upper bound \cite{TCC:MoranNS09} on a two-party coin-flipping protocol have $\Theta(\frac{1}{\eps})$ rounds for output with $\eps$ bias. Similar upper bounds exist for multiparty coin-flipping protocols in the case that at most $\nicefrac{2}{3}$ of the parties are corrupt \cite{CRYPTO:BeimelOO10,STOC:HaitnerT14}, and the best known protocols when more than $\nicefrac{2}{3}$ of the parties are corrupt require $\Omega(\frac{1}{\eps^2})$ rounds for output with $\eps$ bias \cite{STOC:Cleve86,CRYPTO:BeimelOO10}. The comparison between blocks and rounds has some sense to it, because both a Bitcoin block and a round of interaction in a multiparty protocol require propagation of messages on a network. However, in other regards these two notions are incomparable. On the one hand, a Bitcoin block requires a significantly longer time ($10$ minutes on average) than a typical round of interaction. On the other hand, fetching the Bitcoin block can be done by a party who does not even participate in the protocol.

The last remark alludes to the main advantage of Bitcoin-based randomness extraction protocols: unpredictable yet publicly verifiable random bits. In fact, the public verifiability aspect of the beacon has two advantages that are closely related. First, it enables incorruptible protocols by removing any element of trust in the parties who initiated the system. Second, it allows everyone, including new parties who have just begun their participation, to be in agreement on the output of the beacon. Indeed, our results in Section~\ref{sec:upperbound} show that all parties reach consensus on the output that the beacon generates.

\subsection{Related works}
The idea of using Bitcoin as a public randomness source has been explored in \cite{C:AD15} and \cite{EPRINT:BCG15}. The work of \cite{EPRINT:PW16} presents negative results with regard to Bitcoin-based randomness extraction in the presence of unrestricted and budget-restricted adversaries. However, \cite{EPRINT:PW16} considers protocols that extract the output from only a single Bitcoin block, unlike our positive results in Section~\ref{sec:upperbound}.
Also, our lower bound in Section~\ref{sec:lowerbound} applies to protocols
that can use an arbitrary function of all blocks witnessed as the outputted randomness, 
and considers a weak adversary who may not fork the chain, 
so in this sense our lower bound is more general than the lower bound of \cite{EPRINT:PW16}.

\subsection{Organization of the paper}
The contributions of this work are organized as follows. In Section~\ref{sec:pre} we provide some well-known definitions and tools. In Section~\ref{sec:lowerbound} we prove that no protocol can achieve an arbitrarily small bias when the adversary has an infinite budget. In Section~\ref{sec:upperbound} we analyse beacon protocols that defeat a budget-restricted adversary, first in a simplified model and then in a model that captures Bitcoin. In Section~\ref{sec:profitmargin} we discuss the rationale for the assumptions that need to be made with regard to budget-restricted adversaries. In Section~\ref{sec:hybrid} we describe a protocol that combines a public beacon with reliance of trusted parties. In Section~\ref{sec:practical} we discuss some real-world considerations that a Bitcoin-based beacon should take into account. 

\ignore{
\hs{we need to mention that NIST beacon cannot be fully trusted given the fact that certain services e.g., ECDSA has been blamed that NIST has backdoors built in. In this proposal we only assume the hash function provided by NIST has certain property.}

NIST beacon: nice properties and cool applications.

Drawbacks of NIST beacon.

a distributed beacon is highly needed.  

potential constructions.

here is a contraction based on Bitcoin system; 
}



\section{Preliminaries}\label{sec:pre}




We denote $[n]\triangleq\{1,2,\ldots,n\}$.

Let $\Omega$ denote some finite domain, frequently of the form $\Omega=\Sigma^n$ for $\Sigma=[d]$.

For a subset $S\subseteq \Omega$,
we denote $\mu(S)\triangleq |S|/|\Omega|$.

We denote by $\dtv(X,Y)\triangleq\frac{1}{2}\Sigma_{a\in\Omega}\abs{X(a)-Y(a)}$ the
statistical distance between two distributions $X,Y$.
Note that in case $U$ is the uniform distribution on $\Omega=\{0,1\}$ and $X$ also has support $\{0,1\}$, it holds
that $\abs{X(0)-\frac{1}{2}}=\abs{(1-X(0))-\frac{1}{2}}=\abs{X(1)-\frac{1}{2}}=\dtv(X,U)$.



\begin{definition}[non-oblivious symbol-fixing sources]
A distribution $X$ over $\Sigma^n$ is an $(n,k,\Sigma)$ non-oblivious symbol-fixing source if there exists a subset $T=\{i_1,i_2,\ldots,i_k\}\subseteq [n]$ and a function $f:\Sigma^k\rightarrow\Sigma^{n-k}$ such that $(X_{i_1},X_{i_2},\ldots,X_{i_k})$ is uniformly distributed over $\Sigma^k$, and $(X_{j_1},X_{j_2},\ldots,X_{j_{n-k}})=f(X_{i_1},X_{i_2},\ldots,X_{i_k})$ where $\{j_1,j_2,\ldots,j_{n-k}\}=[n]\setminus T$.
\end{definition}

In the special case of $\Sigma=\{0,1\}$, we say that $X$ is an $(n,k)$ non-oblivious bit-fixing source.

\begin{definition}[extractor for symbol-fixing sources]
Let $S$ be a family of $(n,k,\Sigma)$ symbol-fixing sources. An $\eps$-extractor for $S$ is a function $\extr:\Sigma^n\rightarrow\Sigma$ such that $\forall X\in S:\dtv(\extr(X),U)\leq\eps$, where $U$ is the uniform distribution on $\Sigma$.
\end{definition}

The following lemma gives sufficent conditions under which the majority function can act as a randomness extractor.
In essence, the lemma shows that the difference between the amount of 0s and 1s among $n$ uniform random bits is concentrated at around $\sqrt{n}$, similarly to the expected value of the absolute displacement of a simple random walk.

\begin{lemma}\label{lemma:majextr}
Let
$f(x_1,x_2,\ldots,x_n)=
\left\{\begin{array}{ll}
    1 & \mbox{$\sum^n_{i=1} x_i\geq\frac{n}{2}$}\\
    0 & \mbox{ \rm{otherwise} }
    \end{array}\right.$
be the majority function, and let $\ell\triangleq\bigl\lfloor\eps\frac{\pi}{\euler}\sqrt{n-\sqrt{n}}\bigr\rfloor$. If $n$ is odd and $\eps\leq\frac{\euler}{\pi}$ then $f$ is an $\frac{\eps}{2}$-extractor for an $(n,n-\ell+1)$ non-oblivious bit-fixing source.
\end{lemma}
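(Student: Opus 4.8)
The plan is to bound the statistical distance between $f(X)$ and uniform by exploiting the structure of an $(n,n-\ell+1)$ non-oblivious bit-fixing source: it has $k = n-\ell+1$ uniform ("good") coordinates and only $\ell-1$ adversarial ("bad") coordinates determined as a function of the good ones. Since $f$ is the majority of $n$ bits with $n$ odd, the output is never a tie. The key observation is that whether the adversary can flip the majority outcome depends on whether the sum of the good bits is close to $n/2$: if the $k$ good bits already have a sum that differs from $n/2$ by more than $\ell-1$ (in the appropriate direction), the $\ell-1$ bad bits cannot change the majority value no matter how they are set. So the adversary's only leverage is on the event that $\sum_{i \in \text{good}} X_i$ lands in a "window" of width roughly $2(\ell-1)$ around its center. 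Conditioned on being outside this window, $f(X)$ is determined by the good bits alone and is therefore symmetric (hence uniform); conditioned on being inside, the adversary controls the output completely. Therefore $\dtv(f(X),U) \le \tfrac{1}{2}\Pr[\text{sum of good bits lies in the bad window}]$.

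**Next I would** make this window probability precise. With $k = n - \ell + 1$ uniform bits, I need to bound the probability that their sum $S_k$ lies within $\ell - 1$ of the threshold, i.e., an interval of about $2\ell$ integer values near $k/2$. The natural tool is a local central limit / Stirling estimate for the binomial: the maximum of $\binom{k}{j}2^{-k}$ is $\Theta(1/\sqrt{k})$, and more sharply $\binom{k}{\lfloor k/2\rfloor}2^{-k} \le \sqrt{2/(\pi k)}$. Summing over roughly $2\ell$ terms gives a bound of order $2\ell \cdot \sqrt{2/(\pi k)}$. Plugging in $\ell = \lfloor \eps \frac{\pi}{e}\sqrt{n - \sqrt n}\,\rfloor$ and $k = n - \ell + 1 \ge n - \sqrt n$ (using $\ell \le \sqrt n$, which follows from $\eps \le e/\pi$ and $\frac{\pi}{e}\sqrt{n-\sqrt n} \le \frac{\pi}{e}\sqrt n \le \sqrt n \cdot \frac{\pi}{e}$... here one must check the constant carefully), the $\sqrt{n-\sqrt n}$ factors should cancel against $1/\sqrt k$, leaving something like $\eps \cdot \frac{\pi}{e} \cdot \sqrt{2/\pi} \cdot (\text{const})$, which I want to massage down to $\eps$, so that $\dtv \le \eps/2$. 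The constant $\pi/e$ in the definition of $\ell$ is clearly chosen precisely so that this cancellation yields exactly $\eps$; the factor $\sqrt{n-\sqrt n}$ rather than $\sqrt n$ is there to absorb the slack from $k = n-\ell+1$ being slightly less than $n$.

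**The main obstacle** will be getting the constants to work out cleanly — in particular, relating the sum of $2\ell$ central binomial coefficients to the peak value times $2\ell$ without losing more than a constant, and verifying that $k \ge n - \sqrt n$ so that $\sqrt{k} \ge \sqrt{n - \sqrt n}$ as needed. One subtlety: the window is not symmetric around $k/2$ in general — with $k$ good bits out of $n$ total and $n$ odd, the majority flips when $S_k$ crosses $n/2 - (\text{number of bad ones set to }1)$, so the relevant range of $S_k$ is an interval of length $\ell$ (the number of bad bits), and I should double-check it is $\ell - 1$ or $\ell$ and track this off-by-one through the bound. A cleaner route is: $f(X) \ne f(X')$ for the two extreme settings of the bad bits (all zero vs. all one) exactly when $S_k \in \{ \lceil n/2 \rceil - (\ell-1), \ldots, \lceil n/2 \rceil - 1\} \cup \ldots$, an interval of at most $\ell-1$ values — wait, one must be careful, but in any case it is at most $O(\ell)$ consecutive values, and outside it the output equals $\mathrm{maj}$ of the good bits padded with a fixed bit pattern, which by the symmetry $j \leftrightarrow k - j$ of the binomial (here using $n$ odd so no ties) is an unbiased bit. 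I would state the window as an interval of $\ell - 1$ integers, bound $\Pr[S_k \in \text{window}] \le (\ell-1)\binom{k}{\lfloor k/2 \rfloor}2^{-k} \le (\ell - 1)\sqrt{\frac{2}{\pi k}}$, and then finish with $\ell - 1 < \ell \le \eps\frac{\pi}{e}\sqrt{n-\sqrt n}$, $k \ge n - \sqrt n$, and the numeric bound $\frac{\pi}{e}\sqrt{\frac{2}{\pi}} = \sqrt{\frac{2\pi}{e^2}} \approx 0.92 < 1$, giving $\Pr[S_k \in \text{window}] < \eps$ and hence $\dtv(f(X),U) \le \eps/2$.
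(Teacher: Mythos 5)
Your proposal follows essentially the same route as the paper's own proof: both isolate a window of $O(\ell)$ values of the sum of the uniform coordinates in which the adversarial coordinates can still flip the majority, bound the probability of landing in this window by a Stirling estimate on the central binomial coefficient, and use the $j\leftrightarrow k-j$ symmetry of the binomial to argue that outside the window the output is exactly unbiased, giving $\dtv(f(X),U)\le\frac{1}{2}\Pr[\text{window}]$. Your bookkeeping (working directly with the $\ell-1$ bad bits and $k=n-\ell+1$ good bits) is slightly cleaner than the paper's, which proves the $(n,n-\ell)$ case and then handles a parity adjustment, but the underlying argument and all constants are the same.
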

\begin{proof}
By Stirling's approximation, $\frac{1}{2^n}{n \choose n/2}\leq\frac{1}{2^n}\frac{\euler\sqrt{n}(\frac{n}{\euler})^n}{(\sqrt{\pi n}(\frac{n}{2\euler})^{n/2})^2}=\frac{\euler}{\pi}\frac{1}{\sqrt{n}}$. Therefore, for every $k\geq 0$, it holds that $\Pr(k=\sum^n_{i=1} x_i)\leq\Pr(\floor{\frac{n}{2}}=\sum^n_{i=1} x_i)\leq\frac{\euler}{\pi}\frac{1}{\sqrt{n}}$. Let $E$ denote the event that $f$ remains undetermined after $n-\ell$ variables were chosen randomly. Hence,
\begin{eqnarray*}
\pr(E)&=&\sum\limits^{\ell}_{k=1}\Pr\left(\sum\nolimits^{n-\ell}_{i=1} x_i=\Bigl\lceil\frac{n}{2}\Bigr\rceil-k\right)\leq\sum\limits^{\ell}_{k=1}\frac{\euler}{\pi}\frac{1}{\sqrt{n-\ell}}\overbrace{\leq}^{\eps\hspace{2pt}\leq\hspace{1pt}\frac{\euler}{\pi}}\ell\cdot\frac{\euler}{\pi}\frac{1}{\sqrt{n-\sqrt{n}}} \\
&\leq&\Bigl\lfloor\eps\frac{\pi}{\euler}\sqrt{n-\sqrt{n}}\Bigr\rfloor\cdot \frac{\euler}{\pi}\frac{1}{\sqrt{n-\sqrt{n}}}\leq\eps.
\end{eqnarray*}
Assume that $\ell$ is even. Since $n$ is odd, it holds that $\pr(f(x_1,\ldots,x_n)=1|\lnot E)=\pr(\sum\nolimits^{n-\ell}_{i=1} x_i\in\{\ceil{\frac{n}{2}},\ceil{\frac{n}{2}}+1,\ldots,n-\ell\})$ and $\pr(f(x_1,\ldots,x_n)=0|\lnot E)=\pr(\sum\nolimits^{n-\ell}_{i=1} x_i\in\{0,1,\ldots,\floor{\frac{n}{2}}-\ell\})$. Since ${{n-\ell} \choose {\ceil{\frac{n}{2}}+k}}={{n-\ell} \choose {n-\ell-\ceil{\frac{n}{2}}-k}}={{n-\ell} \choose {\floor{\frac{n}{2}}-\ell-k}}$ for $k=0,1,\ldots,{\floor{\frac{n}{2}}-\ell}$, we have that $\pr(f=1|\lnot E)=\pr(f=0|\lnot E)=\nicefrac{1}{2}$. This implies that $f$ is an $\frac{\eps}{2}$-extractor for an $(n,n-\ell)$ non-oblivious bit-fixing source, because
\begin{eqnarray*}
\dtv(f(X),U) &=& \Bigl\lvert \pr\bigl(f(x_1,\ldots,x_n)=0\bigr)-\frac{1}{2} \Bigr\rvert\\
& = & \Bigl\lvert \pr\bigl(\{f=0\}\cap E\bigr) + \pr\bigl(\{f=0\} | \lnot E \bigr)\cdot\pr(\lnot E)-\frac{1}{2} \Bigr\rvert\\
& = & \Bigl\lvert \pr\bigl(\{f=0\}\cap E\bigr) + \frac{1}{2}\cdot\bigl(1-\pr(E)\bigr)-\frac{1}{2} \Bigr\rvert\\
& = &  \Bigl\lvert \pr\bigl(\{f=0\}\cap E\bigr) - \frac{1}{2}\pr(E) \Bigr\rvert
 \hspace{2pt}\leq\hspace{2pt}  \frac{1}{2}\pr(E) \hspace{2pt}\leq\hspace{2pt} \frac{1}{2}\eps.
\end{eqnarray*}
If $\ell$ is odd, the result follows by replacing $\ell$ with $\ell'=\ell-1$.
\end{proof}

\def\B{\{0,1\}}
\def\C{X_1=x_1,\ldots,X_{i=1}=x_{i-1}}
\newcommand{\prandd}{$p$-2-easy source}
\newcommand{\prand}[1]{$p$-2-easy $d-#1$ source}
\newcommand{\prandfamily}[1]{$p$-2-easy $d-#1$ source family}
\newcommand{\dn}{\ensuremath{[d]^{n}}\xspace}
\newcommand{\dd}[1]{\ensuremath{[d]^{#1}}\xspace}
\renewcommand{\d}{\ensuremath{[d]}\xspace}
\newcommand{\pre}[1]{#1'}

\section{Adversaries with an infinite budget}\label{sec:lowerbound}
In this section we prove our lower bound. Our proof is inspired by 
a lower bound on extraction from generalized Santha-Vazirani (SV) sources from Appendix B of  Beigi, Etesami and Gohari~\cite{ICALP:BEG15}.
We mention that this proof is based, in turn, on an elegant proof of 
 Reingold, 
 Vadhan and 
 Wigderson \cite{BLOG:RVW12} simplifying the original lower bound of \cite{FOCS:SanVaz84} for randomness extraction from SV sources.

\ignore{
For a subset $S\subseteq \Omega$ 
of some finite domain $\Omega$,
we denote $\mu(S)\triangleq |S|/|\Omega|$.
Frequently, $\Omega$ will be $\dn$.
As is done frequently in theoretical computer science (without even noticing),
we will sometimes use the term distribution,
to refer to a random variable distributed according to that distribution.
\ariel{perhaps move notation paragraph to preliminaries section}
}

\newcommand{\psource}{$p$-resettable source\xspace}
\newcommand{\preset}{$p$-resettable\xspace}

\newcommand{\dist}{\ensuremath{X}\xspace}
\newcommand{\prex}{\ensuremath{x_1,\ldots,x_{i-1}}}
\newcommand{\preX}{\ensuremath{X_1,\ldots,X_{i-1}}}
\newcommand{\predn}{\ensuremath{[d]^{-1}}}
\newcommand{\preevent}{\ensuremath{X_1=x_1,\ldots,X_{i-1}=x_{i-1}}}

The purpose of the next definition is to formally model the distribution of blocks
generated when an adversary has control of a $p$-th fraction of the mining power.
We define a relatively weak adversary \adv.
Specifically, \adv does not have the power to fork the network and try to create alternate chains.
The only thing \adv is able to do is to try to mine the next block himself, and if he succeeds before the honest miners, he can decide whether to publish this block or let the honest miners publish their version of the next block.
Furthermore, \adv must make this decision \emph{before} seeing the honest miners' version of the 
next block.
Note that one may regard \adv to be an adversary with an infinite budget, under the interpretation that producing each block has a cost and \adv can discard block rewards without repercussions.
Let us emphasize that since we use this definition for our randomness extraction \emph{impossibility result},
the weak adversary model makes the result \emph{stronger}.
\ariel{This sentence is confusing in the context of this section -  Would delete it. You're forcing the reader to juggle to many things that are not relevant for the lower bound.  A better place to mention it is in intro or later when you discuss the forkless model. Update by iddo: added "later"}Indeed, if we assume that all adversaries have an infinite budget, \adv is weaker than the adversary $\mc{A}_1$ of the $\nofork$ model (cf. Figure~\ref{fig:forkless}), yet $\mc{A}_1$ is the weakest adversary that we will consider later in our positive results.
We proceed to the formal definition.
\begin{definition}[\psource]\label{dfn:p-source}
 Fix integers $d$ and $n$, and $0<p\leq 1$.
 A distribution \dist on \dn is a \emph{\psource}, or simply \preset,
 if it can be sampled symbol by symbol by a randomized algorithm \adv via
 the following process.
 For any $\prex \in \dd{i-1}$, given that
 \preevent,
 $X_i$ is sampled as follows.
 \begin{enumerate}
  \item A uniform element $a\in \d$ is chosen.
  \item With probability $1-p$, $X_i$ is set to $a$.
  Otherwise, $a$  is given to algorithm \adv . \adv 
  can now choose, given access to $a$ and \prex, either to set $X_i =a$, or discard $a$ and then choose a new uniform element
  $b\in \d$ and set $X_i=b$.
 \end{enumerate}
We emphasize that \adv must choose whether to `reset' $X_i$, i.e. to discard $a$,  \emph{before} sampling $b$.
\end{definition}
Note that using the definition above also for $n=1$, we have that
if for every $i\in [n]$, and every $\prex \in \dd{i-1}$
$(X_i|\preevent)$ is \preset, then \dist is \preset.

\paragraph{Proof idea:}
We sketch the idea of the proof, relating it to the proofs of \cite{BLOG:RVW12} and \cite{ICALP:BEG15}.
\cite{BLOG:RVW12} show an extraction lower bound for the family of `slightly imbalanced sources'.
These are distributions on \dd{n} with the property that the ratio of the probabilities
given to any two strings in \dd{n} is close to one.
Most of our work will be to show that any slightly imbalanced source is in fact a
\psource.
This is a similar strategy to \cite{ICALP:BEG15}, that `embed' slightly imbalanced sources into generalized SV sources for the purpose of their lower bound.
We begin with the definition of a $p$-perturbed distribution on \d,
which will be useful for this purpose.

\newcommand{\pdist}{$p$-perturbed}
\newcommand{\pertrubed}[1]{$#1$-perturbed}
\begin{definition}\label{dfn:pdist}
A distribution \dist on \d is called \emph{\pdist},
if for any element $a\in \d$,
\[(1-p)/d \leq \pr(\dist =a )\leq (1+p)/d.\]
\end{definition}
\begin{lemma}\label{lem:pertrubed_is_source}
Let \dist be a distribution on \d, and fix any $0<p\leq 1$.
If \dist is \pertrubed{p/2} then $X$ is a \psource.
\end{lemma}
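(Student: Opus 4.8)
Since the claim concerns a single symbol, I would begin by writing down, in full generality, the form of an arbitrary \preset distribution on \d. After the uniform $a\in\d$ is drawn and (with probability $p$) handed to \adv, the only freedom \adv has is to decide whether to keep $a$ or to discard it and resample a fresh uniform $b$; by Definition~\ref{dfn:p-source} this decision must be made \emph{before} $b$ is drawn, so it depends on $a$ alone. I therefore encode \adv's strategy by numbers $q_a\in[0,1]$ for $a\in\d$, where $q_a$ is the probability that \adv keeps $a$. Writing $Q\triangleq\frac1d\sum_{a\in\d}q_a$ for the average keep‑probability, a routine calculation over the two stages of the process gives
\[
\pr(\dist=a)=\frac{1-p}{d}+p\Bigl(\frac{q_a}{d}+\frac{1-Q}{d}\Bigr)=\frac1d\bigl(1+p(q_a-Q)\bigr),
\]
so the \preset distributions on \d are exactly those of this form as $q$ ranges over $[0,1]^d$.

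Next I would take a \pertrubed{p/2} distribution \dist, set $\pi_a\triangleq\pr(\dist=a)$, and solve the displayed identity for the unknown $q_a$: it forces $q_a=\frac{d\pi_a-1}{p}+Q$. Summing this over $a$ and using $\sum_a\pi_a=1$ shows the normalization $\sum_a q_a=dQ$ holds for \emph{every} value of $Q$, so $Q$ is a free parameter. I would then take the symmetric choice $Q=\tfrac12$, giving $q_a=\frac{d\pi_a-1}{p}+\tfrac12$. The hypothesis that \dist is \pertrubed{p/2} says exactly that $|d\pi_a-1|\le p/2$, hence $\frac{d\pi_a-1}{p}\in[-\tfrac12,\tfrac12]$ and therefore $q_a\in[0,1]$ — a legitimate strategy for \adv. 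Plugging back in, $\frac1d\bigl(1+p(q_a-Q)\bigr)=\frac1d\bigl(1+(d\pi_a-1)\bigr)=\pi_a$, so this process samples \dist exactly, which proves \dist is \preset.

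I do not anticipate a real obstacle; the argument is essentially a change of variables. The two points to get right are the bookkeeping that makes $Q$ a free parameter (so that the symmetric choice $Q=\tfrac12$ is available), and the observation that it is precisely the factor $p/2$ rather than $p$ in Definition~\ref{dfn:pdist} that lands every $q_a$ in $[0,1]$ under that choice — a one‑sided perturbation bound would instead require a shifted $Q$. Finally, the remark following Definition~\ref{dfn:p-source} is what lets this single‑symbol statement be applied conditionally, coordinate by coordinate, wherever it is needed later.
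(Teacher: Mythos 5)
Your proof is correct and is essentially the same as the paper's: the paper directly posits the keep‑probability $q_a = u_a/p$ with $u_a \triangleq d\,\pr(\dist=a) - (1-p/2)$, which is exactly your $q_a = \frac{d\pi_a-1}{p}+\frac12$, and verifies it reproduces $\dist$. Your derivation is a bit more systematic (solving the change of variables and noting $Q$ is free), but it arrives at the identical strategy and calculation.
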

\begin{proof}
 Fix \dist that is \pertrubed{p/2}.
 For $a\in \d$, we define a quantity $u_a$ to measure ``how much probability $a$ is given beyond required minimum''.
 Formally, we define 
 \[u_a\triangleq \pr(\dist = a)\cdot d - (1-p/2).\]
 Note that, as \dist is \pertrubed{p/2}, $0\leq u_a\leq p$ for all $a\in \d$.
 

Consider the following sampling procedure.
\begin{enumerate}
 \item Sample $a\in \d$ uniformly.
 \item With probability $1-p+u_a$, output $a$.
 Otherwise, output a uniform $b\in \d$.
\end{enumerate}
We claim the distrbution $Y$ sampled by this procedure is a \psource.
This is because the ``resetting'', i.e. outputting $b$ rather than $a$,
always happens with probability at most $p$, even after conditioning on the value of $a$.
We denote by $\eta$ the probability that resetting occurred; i.e.,  the overall probability that the procedure outputs $b$ rather than $a$, \emph{without} conditioning on the value of $a$.
We have
\[\mu= \expect_{a\in \d}[p-u_a] = p/2,\]
as $\expect_{a\in \d} [u_a] = p/2$.

Fix any $c\in [d]$. We will show that 
$\pr(Y=c)= \pr(\dist=c)$; this implies $Y\equiv \dist$ which means
\dist is a \psource.
The event $Y=c$ is a union of the following two \emph{disjoint} events $A$ and $B$:
\begin{itemize}
 \item $A$: $c$ was output as $a$ in the first stage of the procedure.
 We have 
 \[\pr(A)=1/d\cdot(1-p+u_c).\]
 \item $B$: $c$ was output as $b$ in the resetting stage.
 $\pr(B)$ is the product of the probability $\eta=p/2$ that resetting occurred
 times $1/d$, as given that we are resetting, the output is uniform.
\end{itemize}

Thus, for any $c\in \d$, we have
\[\pr(Y=c) = (1-p+u_c)/d  +(p/2)/d = (1-p/2+u_c)/d = \pr(X=c). \]
Therefore, \dist is a \psource.
\end{proof}

\begin{claim}\label{clm:inequality}
 Fix $0< q\leq 1/3$.
 We have
 \[1-2q\leq \frac{1-q}{1+q} \leq \frac{1+q}{1-q} \leq 1+3q.\]
\end{claim}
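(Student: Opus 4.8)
The plan is to verify the three inequalities one at a time, in each case clearing denominators and reducing to an elementary polynomial inequality in $q$. The preliminary observation that makes all the cross-multiplications legitimate is that $0 < q \leq 1/3$ forces $1-q \geq 2/3 > 0$ and $1+q \geq 1 > 0$, so both $1-q$ and $1+q$ are strictly positive and multiplying an inequality by either of them preserves its direction.

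First I would handle the leftmost inequality $1 - 2q \leq \frac{1-q}{1+q}$. Multiplying both sides by $1+q > 0$, this is equivalent to $(1-2q)(1+q) \leq 1-q$, i.e.\ $1 - q - 2q^2 \leq 1 - q$, i.e.\ $-2q^2 \leq 0$, which is true for every real $q$ (no constraint needed here). Next, the middle inequality $\frac{1-q}{1+q} \leq \frac{1+q}{1-q}$: multiplying through by the positive quantity $(1-q)(1+q)$ turns it into $(1-q)^2 \leq (1+q)^2$, which expands to $0 \leq 4q$ and holds since $q > 0$.

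The rightmost inequality $\frac{1+q}{1-q} \leq 1 + 3q$ is the only place the hypothesis $q \leq 1/3$ is actually used. Multiplying by $1-q > 0$ gives the equivalent statement $1 + q \leq (1+3q)(1-q) = 1 + 2q - 3q^2$, i.e.\ $0 \leq q - 3q^2 = q(1-3q)$. Since $q > 0$ and $1 - 3q \geq 0$ exactly when $q \leq 1/3$, this product is nonnegative, which completes the proof.

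There is no real obstacle here — the argument is three short polynomial manipulations. The only points worth stating explicitly are (i) recording that $1 \pm q > 0$ so the cross-multiplications are valid, and (ii) pointing out that the constraint $q \leq 1/3$ (as opposed to merely $q < 1$) enters only through the last inequality, via the factorization $q - 3q^2 = q(1-3q)$.
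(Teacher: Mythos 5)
Your proof is correct. All three cross-multiplications are valid (both $1-q$ and $1+q$ are positive for $0<q\leq 1/3$), and the resulting polynomial inequalities $-2q^2\leq 0$, $0\leq 4q$, and $0\leq q(1-3q)$ hold under the stated hypothesis. The paper itself states Claim~\ref{clm:inequality} without proof, so there is no alternative argument to compare against; your elementary approach of clearing denominators and reducing to sign checks is exactly what one would expect, and your observation that the constraint $q\leq 1/3$ is used only in the rightmost inequality is a nice touch.
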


\begin{theorem}\label{thm:mainLB}
For any function $E:\dn\to \B$ and any $0<p\leq 1$, there is a \psource
\dist such that $E(\dist)$ has bias at least $p/12$.
\end{theorem}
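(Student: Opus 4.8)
I would follow exactly the strategy sketched above: given $E$, first build a ``slightly imbalanced'' distribution on $\dn$ that already forces $E$ to be noticeably biased, and then show that any such distribution is a \psource, via the route: slightly imbalanced $\Rightarrow$ each coordinate perturbed $\Rightarrow$ resettable, using \claimref{clm:inequality} and \lemmaref{lem:pertrubed_is_source}. The parameters get matched so that the guaranteed bias comes out to exactly $p/12$.

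\textbf{A slightly imbalanced source that biases $E$.} Fix a parameter $0<q\le 1$, put $S=E\inv(1)$, $\mu_0=\mu(S)$, and let $T=S$ if $\mu_0\ge\tfrac12$ and $T=\dn\setminus S$ otherwise, so that $\mu(T)\ge\tfrac12$ and pushing probability mass into $T$ drives $\pr(E(X)=1)$ away from $\tfrac12$. The candidate source is the mixture $X=(1-q)\,U+q\,U_T$, with $U$ uniform on $\dn$ and $U_T$ uniform on $T$. I would then verify two things. First, every string in $T$ receives probability $\tfrac1{|\dn|}\bigl(1-q+\tfrac{q}{\mu(T)}\bigr)$ and every other string receives $\tfrac{1-q}{|\dn|}$, so the ratio between the probabilities of any two strings is at most $1+\tfrac{q}{(1-q)\mu(T)}\le 1+\tfrac{2q}{1-q}=\tfrac{1+q}{1-q}$, using $\mu(T)\ge\tfrac12$. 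Second, $\pr(X\in T)=(1-q)\mu(T)+q\ge\tfrac{1-q}{2}+q=\tfrac{1+q}{2}$, so $E(X)$ lands on its ``$T$-side'' with probability $\ge\tfrac12+\tfrac q2$ and hence has bias at least $q/2$. (If $S$ is empty or all of $\dn$ then $T=\dn$ and $X=U$, which already has bias $\tfrac12$.) This is, in compact form, the Reingold--Vadhan--Wigderson argument; a coordinate-by-coordinate martingale construction would work too, but the mixture makes both properties transparent.

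\textbf{Imbalanced implies resettable.} Now let $X$ be any distribution on $\dn$ with pairwise string-probability ratio at most $\tfrac{1+q}{1-q}$, for some $0<q\le\tfrac13$. The global ratio bound forces every string to have positive probability, and it descends to conditionals: for a prefix $w$ and symbols $a,b\in\d$, writing $\pr(X_{i+1}=a\mid X_{\le i}=w)$ and $\pr(X_{i+1}=b\mid X_{\le i}=w)$ as sums over the completions and applying the mediant inequality $\frac{\sum_j\alpha_j}{\sum_j\beta_j}\le\max_j\frac{\alpha_j}{\beta_j}$, the ratio of these conditional probabilities is again at most $\tfrac{1+q}{1-q}$. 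By \claimref{clm:inequality} this is at most $1+3q$; since the conditional probabilities sum to $1$, their maximum is $\ge\tfrac1d$ and their minimum is $\le\tfrac1d$, so each of them lies in $\bigl[\tfrac{1-3q}{d},\tfrac{1+3q}{d}\bigr]$, i.e.\ the law of $X_{i+1}$ given $X_{\le i}=w$ is \pertrubed{3q}. Applying \lemmaref{lem:pertrubed_is_source} with its parameter taken to be $6q$, each such conditional is a $6q$-resettable source on \d, so by the remark following \definitionref{dfn:p-source} the distribution $X$ is a $6q$-resettable source.

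\textbf{Combining.} Given $0<p\le 1$, set $q=p/6\le\tfrac16\le\tfrac13$. The first part produces a distribution $X$ whose pairwise string-probability ratio is at most $\tfrac{1+q}{1-q}$ and for which $E(X)$ has bias at least $q/2=p/12$; the second part certifies that this $X$ is a $p$-resettable source (as $6q=p$). That is the source the theorem asks for. The only delicate point is the bookkeeping of constants: the ``$12$'' leaves no slack, so one must check that the imbalance level $\tfrac{1+q}{1-q}$ is at once weak enough to be realized by the mixture while still forcing bias $q/2$, and strong enough to translate --- through the factor $3$ of \claimref{clm:inequality} and the factor $2$ in \lemmaref{lem:pertrubed_is_source} --- into $6q$-resettability. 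Everything else (the mediant step, the perturbed-ness estimate, the degenerate cases) is routine.
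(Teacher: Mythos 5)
Your proof is correct and follows essentially the same approach as the paper: build a slightly imbalanced source on \dn that biases $E$ by $p/12$, then show via \claimref{clm:inequality} and \lemmaref{lem:pertrubed_is_source} that each conditional coordinate distribution is $(p/2)$-perturbed and hence $p$-resettable. The only cosmetic differences are that the paper fixes $\mu(S)=1/2$ exactly (assuming $d$ even) and assigns probabilities $(1\pm p/6)/d^n$ directly rather than via a mixture, and it bounds the conditional probability $\eta_a$ by separately estimating its numerator and denominator instead of your mediant-inequality step --- both routes yield the same $(p/2)$-perturbed estimate.
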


\begin{proof}
 Fix any function $E:\dn\to \B$, and any $0<p\leq 1$.
 Assume w.l.o.g. that $\mu(E^{-1}(0)) \geq 1/2$,
 and fix a set $S\subseteq E^{-1}(0)$ with $\mu(S)=1/2$ 
 (Here we assumed for simplicity $d$ is even. Otherwise,
 the proof can be altered).
 Define a distribution \dist on $\dn$ as follows.
 \begin{itemize}
  \item Any $x\in S$ has probability $(1+p/6)/d^n$
  \item Any $x\notin S$ has probability $(1-p/6)/d^n$
  \end{itemize}
  
  We have $\pr(E(X) = 0)\geq 1/2 + p/12$.
It is left to show that \dist is a \psource.
   Note that for any set $T\subseteq \dn$,
 \[(1-p/6)\cdot \mu(T) \leq \pr(X\in T) \leq  (1+p/6)\cdot \mu(T).\]
  
  Fix $i\in [n]$, $\prex\in \dd{i-1}$, and $a\in \d$.
  Denote
  \[\eta_a \triangleq \pr(X_i =a | \preevent) = \frac{\pr(X_1=x_1,\ldots,X_{i-1}=x_{i=1},X_i =a)}{\pr(\preevent)}\]
  
  We know that
 \[  (1-p/6)\cdot d^{- n+(i+1)} \leq \pr(\preevent)\leq(1+p/6)\cdot d^{-n+(i+1)},\] 
 \[  (1-p/6)\cdot d^{-n+i} \leq \pr(X_1=x_1,\ldots,X_{i}=x_{i}) \leq(1+p/6)\cdot d^{-n+i}. \]
  
  Using Claim \ref{clm:inequality} with $q=p/6$, we have
  \[ (1-p/3)/d\leq \frac{1-p/6}{d(1+p/6)}\leq \eta_a \leq \frac{1+p/6}{d(1-p/6)}\leq (1+ p/2)/d\]
  Thus, for any $i\in [n]$, $\prex\in \dd{i-1}$,
  $(X_i|\preevent)$ is \pertrubed{p/2}, and therefore \preset.
  It follows that \dist is a \psource.
  \end{proof}
\begin{remark}[Extractors with unbounded input length]
 One may wonder whether an extractor that does not
 have a fixed input length $n$ can get smaller error.
  However, for any such extractor $E$  
 and any $\eps>0$,  we can take $n$ such
 that the probability that $E$ produces an output after $n$ symbols is
 at least $1-\eps$, and define $E'$ to be the extractor
 that reads $n$ symbols and answers according to $E$,
 or answers $0$ if $E$ hasn't terminated.
 (It seems a reasonable assumption that such $n$ exists if $E$ is to be useful).
 By the lower bound, $E'$ has error at least $p/12$,
 which implies the error of $E$ is at least $p/12 - \eps$.
 As this holds for any $\eps>0$, the error of $E$ must be at least, say, $p/13$.
\end{remark}
\begin{remark}[Efficiency of the adversary]\label{rem:effAdvLB}
 The proof of Theorem \ref{thm:mainLB} shows that there \emph{exists} a \psource
 that will bias the output of $E$. One may wonder, if the ``adversary can be efficient'';
 that is, whether the algorithm \adv from Definition \ref{dfn:p-source} - deciding whether to ``reset or not'', can be efficient.
 Examination of the proof shows that when $E$ is computable in time $\poly(n,d)$;
 we can construct a \psource $X'$ where \adv will operate in time $\poly(n,d)$
 and such that $E(X')$ will have bias at least $p/13$.
 We sketch why this is so.
 Let $X$ be the distribution defined in the beginning of the proof of Theorem \ref{thm:mainLB}.
 We assume for simplcity here that $\pr(E(x)=0) = 1/2$; otherwise, we must start by approximating this probability and this will add an arbitrarily small error term.
 We first observe that when $E$ is efficiently computable, the distrbution $X$ can be 
 efficiently sampled:
 Simply choose random $x\in \dn$, compute $E(x)$, and discard $x$ and resample with the appropriate probability when $E(x)=1$, to give $x$'s with $E(x)=0$ the desired larger probability $(1+p/6)/d^n$.
 In a similar way, we can efficiently sample distributions of the form $(X|\preevent)$, by starting with a random $x\in \dn$ with prefix $(x_1,\ldots,x_{i-1})$, rather than a completely uniform $x\in\dn$.
 Thus, we can approximate the probabilities $\eta_a \triangleq \pr(X_i =a | \preevent)$ appearing in the proof.
 Inspection of the proof of Lemma \ref{lem:pertrubed_is_source} shows these probabilities are all we need to sample $X$ correctly; thus, \adv can efficiently approximate $X$.
\end{remark}




\def\maxcoins{\mathsf{maxprofits}}
\def\kparam{k}
\def\claimtwo{Lemma\xspace}

\section{Beacon for budget-restricted adversaries}\label{sec:upperbound}
It is entirely reasonable to assume that a Bitcoin miner will incur significant losses if she does not claim the rewards for a large portion (such as half) of the blocks that she produces, and hence she will deplete her budget when she employs a strategy of this kind. Still, the beacon protocol needs to establish its correctness in the face of an adversary who wishes to offset such losses by attempting to accumulate profits during certain periods of the protocol execution. With this in mind, let us provide the following abstract assumption.

\begin{assumption}\label{assume:maxprof} Let $\maxcoins:\mathbb{Q}\times\mathbb{N}\rightarrow\mathbb{Q}$ be a function that is monotonically increasing in both of its arguments. Consider a Bitcoin miner who invested $t$ coins to buy her mining equipment. For any segment of $n$ consecutive blocks of the chain, the amount of coins profit (i.e., revenues minus operating expenses) that this miner will earn as a result of creating some portion of the blocks in this segment is bounded by $\maxcoins(t,n)$.
\end{assumption}
The function $\maxcoins$ bounds the profitability of generating blocks. For now, $\maxcoins$ can be thought of as satisfying the condition $\forall n: \maxcoins(t,n)\leq 2t$. See Section \ref{sec:profitmargin} for a more thorough examination of Assumption~\ref{assume:maxprof}, and in particular Example \ref{example:profits} with regard to why Assumption~\ref{assume:maxprof} is unavoidable.

\medskip

To explain the main considerations that come into play in a budget-restricted setting, let us first provide a proof in a simplified model $\nofork$, that is specified in Figure~\ref{fig:forkless}. In this model, blockchain forks never occur, and all parties including the adversary create blocks only at the head of the chain. 
Note that we associate each block with a formal symbol, and so we use the terms ``block'' and ``symbol'' interchangeably. 

\begin{figure}[htb]
\begin{minipage}{\textwidth}
\begin{framed}
{
\small
\begin{center} Forkless model $\nofork$ \end{center}
The forkless model with parameter $p$ proceeds in turns, as follows:
\begin{enumerate}
\item In each turn, the adversary $\mc{A}_1$ is successful with probability $p$ and unsuccessful with probability~$1-p$.
\item In case $\mc{A}_1$ is successful, she receives a uniform random block from the random oracle and can either discard it or extend the head of the chain with it.
\item In case $\mc{A}_1$ is unsuccessful, the chain is extended with a uniform random block.
\end{enumerate}
}
\end{framed}
\end{minipage}
\caption{The forkless model.}\label{fig:forkless}
\end{figure}

Let us consider a protocol that outputs a single bit. Since we may assume w.l.o.g. that the adversary wishes to bias the output towards $1$, and that each block is associated with a symbol that is represented by bits, we define the following notation.

\begin{definition}\label{def:helpdet}
A ``helpful'' block is a block such that the symbol that is associated with it has $1$ as its least significant bit (LSB). Likewise, a ``detrimental'' block is a block such that the symbol that is associated with it has LSB that equals $0$.
\end{definition}

In the following lemma, the parameter $\delta$ measures how large the profit margin of the adversary can be. For example, one may think of $\delta=\nicefrac{2}{3}$ throughout the analysis. If $p$ is rather small so that $p'=\frac{\nicefrac{p}{2}}{1-\nicefrac{p}{2}}\approx\nicefrac{p}{2}$, the profit margin $z_p=px-y_p$ should be small enough so that $w_p=\frac{1}{\delta}p' x-y_p\approx\frac{1}{\delta}\frac{1}{2}p x -y_p=\frac{3}{4}p x-y_p$ is a negative number. Even in the case that $p$ is quite large, say $p=\nicefrac{1}{5}$, if we substitute exemplary numbers such as $x=50,y_p=9,z_p=\nicefrac{50}{5}-9=1$, we get that $w_p=\nicefrac{3}{2}\cdot\nicefrac{1}{9}\cdot 50-9=\nicefrac{-2}{3}<0$. Thus, in the case of $\delta=\nicefrac{2}{3}$, the parameter $n$ should be large enough so that $p_0=\exp\{-\frac{1}{3}(\frac{1}{\delta}-1)^2 \delta\ell\}=\exp\{-\frac{\ell}{18}\}$ is negligible.

\begin{lemma}\label{claim:upbound1}
Consider an adversary $\mc{A}_1$ who operates in the model $\nofork$ with parameter $p$. Assume that $\mc{A}_1$ purchased her $p$ fraction of the mining power with $t_1$ coins, and that she has a reserve of $t_2$ additional coins. Denote $T(i)\triangleq t_2+\maxcoins(t_1,i)$. Assume that each block reward is $x$ coins, and that each attempt by $\mc{A}_1$ to solve the next block costs her at least $y_p$ coins. Denote the maximal profit margin of $\mc{A}_1$ by $z_p\triangleq px-y_p>0$ coins. Denote $p'\triangleq\frac{\nicefrac{p}{2}}{1-\nicefrac{p}{2}}$. Assume that $z_p$ is small enough so that $w_p\triangleq \frac{1}{\delta}p' x-y_p<0$ for some constant $\delta\in(\frac{1}{2},1)$. Assume that $\mc{A}_1$ stops operating if she runs out of coins. Consider the following beacon protocol:
\begin{enumerate}
\item Fetch $n$ consecutive blocks $B_1,B_2,\ldots,B_n$ from the blockchain.
\item Obtain a bit $b_i$ from each $B_i$ by taking the LSB of the symbol that is associated with $B_i$. 
\item Output the bit $\mathsf{majority}(b_1,b_2,\ldots,b_n)$.
\end{enumerate}
Let $\eps>0$ be arbitrary. 
Denote $\ell\triangleq \floor{\eps\frac{\pi}{\euler}\sqrt{n-\sqrt{n}}}$, $p_0\triangleq e^{-\frac{1}{3}(\frac{1}{\delta}-1)^2 \delta\ell}$. If $n$ is taken to be large enough so that $T(n)+ \delta\frac{1}{p'} \ell w_p<0$, then this protocol outputs a bit whose statistical distance from a uniform random bit is at most $\eps+p_0+\frac{\euler}{\pi}\frac{1}{\sqrt{n-\ell}}$.
\end{lemma}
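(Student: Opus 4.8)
The plan is to combine two ingredients: (i) the majority extractor guarantee of Lemma~\ref{lemma:majextr}, which says that if at most $\ell-1$ of the $n$ bits are adversarially controlled (equivalently, at least $n-\ell+1$ are uniform), then $\mathsf{majority}$ is an $\frac{\eps}{2}$-extractor; and (ii) a budget argument showing that the adversary $\mc{A}_1$ can control at most $\ell$ of the $n$ blocks except with small probability. The final error bound $\eps + p_0 + \frac{\euler}{\pi}\frac{1}{\sqrt{n-\ell}}$ should decompose accordingly: an $\eps$-ish term from the extractor when few blocks are corrupted, a $p_0$ term bounding the probability that the adversary corrupts ``too many'' blocks before running out of money, and the last term accounting for the one extra corrupted block (going from $n-\ell+1$ uniform bits down to $n-\ell$) — essentially re-running the Stirling estimate from Lemma~\ref{lemma:majextr} to absorb the off-by-one.

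First I would make precise what ``a block is controlled by $\mc{A}_1$'' means in the $\nofork$ model: in each of the $n$ turns corresponding to $B_1,\dots,B_n$, $\mc{A}_1$ gets a chance to reset with probability $p$, and a block only deviates from uniform-and-honest when $\mc{A}_1$ is successful \emph{and} chooses to discard the oracle's block and re-sample. Crucially, each such reset of a block that $\mc{A}_1$ actually publishes earns her the reward $x$ but costs $y_p$; and the key quantitative point is that, conditioned on $\mc{A}_1$ being the one who publishes the block (so she's paying to win the race), the ``effective'' probability of a useful outcome is governed by $p' = \frac{p/2}{1-p/2}$ — this is where the constant shows up. Over the $n$-block segment, by Assumption~\ref{assume:maxprof} her total available funds are $T(n) = t_2 + \maxcoins(t_1,n)$, so her net position after committing to $m$ resets is dominated by $T(n)$ plus a drift term. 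I would set up a supermartingale / Chernoff argument: let $R$ be the number of blocks among $B_1,\dots,B_n$ that $\mc{A}_1$ resets; each contributes expected net change $\le w_p < 0$ scaled by the race-winning factor, so the probability that $R \ge \delta \frac{1}{p'}\ell$ while never going bankrupt is controlled by a multiplicative Chernoff bound of the form $p_0 = e^{-\frac13(\frac1\delta - 1)^2 \delta \ell}$ — the hypothesis $T(n) + \delta\frac{1}{p'}\ell w_p < 0$ is exactly what guarantees that sustaining $\ge \delta\frac1{p'}\ell$ resets forces the budget negative, so bankruptcy stops her before she reaches $\ell$ corrupted blocks.

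Then I would finish by conditioning. Let $\mathcal{E}$ be the event that at most $\ell-1$ of $B_1,\dots,B_n$ are corrupted. On $\neg\mathcal{E}$, bound the statistical distance trivially by $\frac12$, but absorb it: actually I'd instead argue $\Pr[\neg\mathcal{E}] \le p_0$ directly from the Chernoff step (choosing the Chernoff threshold at $\ell$ after noting $\delta\frac1{p'} \le 1$, or handling the gap by the $\frac{\euler}{\pi}\frac{1}{\sqrt{n-\ell}}$ term), and on $\mathcal{E}$ invoke Lemma~\ref{lemma:majextr} — with a small subtlety that Lemma~\ref{lemma:majextr} is stated for $(n, n-\ell+1)$ sources, so if $\mathcal{E}$ only gives $n-\ell$ uniform coordinates I re-do the one-line Stirling bound $\frac{1}{2^n}\binom{n}{n/2} \le \frac{\euler}{\pi}\frac{1}{\sqrt n} \le \frac{\euler}{\pi}\frac{1}{\sqrt{n-\ell}}$ to pay for the extra coordinate. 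Combining: $\dtv(\mathsf{majority}, U) \le \Pr[\mathcal{E}]\cdot(\frac\eps2 + \frac{\euler}{\pi}\frac{1}{\sqrt{n-\ell}}) + \Pr[\neg\mathcal{E}]\cdot \frac12 \le \eps + p_0 + \frac{\euler}{\pi}\frac{1}{\sqrt{n-\ell}}$ after bookkeeping.

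The main obstacle I expect is the budget/supermartingale step — pinning down the precise conditional process (what exactly $\mc{A}_1$ pays and gains, and why the race-winning conditioning produces $p'$ rather than $p$), verifying that it is genuinely a supermartingale with the stated drift $w_p$, and getting the Chernoff exponent to come out to exactly $\frac13(\frac1\delta-1)^2\delta\ell$. The extractor side and the final union bound are routine once the number of corrupted blocks is under control; the delicate accounting is entirely in translating ``limited budget + cost $y_p$ per attempt + reward $x$'' into ``at most $\ell$ resets with probability $1-p_0$.''
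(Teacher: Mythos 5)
Your overall decomposition matches the paper's: bound the probability that $\mc{A}_1$ corrupts ``too many'' locations by a Chernoff-type tail estimate ($p_0$), invoke the majority-extractor analysis when she does not, and combine. However, two points need fixing. First, you describe the final term $\frac{\euler}{\pi}\frac{1}{\sqrt{n-\ell}}$ as paying for an off-by-one between $n-\ell+1$ and $n-\ell$ uniform coordinates; in the paper it is not that — it is the \emph{tie} probability. The paper conditions on the adversary's marked blocks being insufficient to flip the majority, and then still has to account for the event that the at-least-$(n-\ell)$ honest uniform bits end in a tie (in which case the marked blocks break it); the Stirling bound $\frac{1}{2^m}\binom{m}{m/2}\leq\frac{\euler}{\pi}\frac{1}{\sqrt{m}}$ with $m=n-\ell$ yields exactly $\frac{\euler}{\pi}\frac{1}{\sqrt{n-\ell}}$. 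Second, your statement of the concentration step is garbled: you want to bound the probability that the adversary reaches $\ell$ successes in \emph{fewer than} $\delta\frac{1}{p'}\ell$ trials, since that is the only way she can collect $\ell$ helpful blocks before her budget forces her to stop. The paper does this cleanly with a lower-tail bound on the negative binomial $Y(\ell,p')$ — equivalently $\pr(B(\delta\frac{1}{p'}\ell,p')>\ell)\leq e^{-\frac{1}{3}(\frac1\delta-1)^2\delta\ell}$ — rather than with a supermartingale, and the hypothesis $T(n)+\delta\frac{1}{p'}\ell w_p<0$ is used exactly to argue that if $Y\geq\delta\frac{1}{p'}\ell$ the budget goes negative, so the complementary event has probability at most $p_0$. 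You should also note that the paper explicitly reduces to a two-mode adversary (honest at $d$ adaptively chosen locations, discard-detrimental at $n-d$), so that Assumption~\ref{assume:maxprof} caps her honest-mode earnings at $\maxcoins(t_1,n)$ and her attack budget at $T(n)$; your write-up assumes this implicitly, but it is a real step that deserves to be spelled out, including the edge case where $\delta\frac{1}{p'}\ell>n-d$ (she runs out of locations rather than coins) and the possibility of switching modes mid-trial.
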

\begin{proof}
The adversary $\mc{A}_1$ may opt to increase her $t_2$ coins reserve by operating as an honest miner in some locations, even though that entails that $\mc{A}_1$ would then have less opportunities to influence the output of the beacon in the remaining locations. Assume w.l.o.g. that $\mc{A}_1$ wishes to bias the output of the beacon towards $1$. Since $\mc{A}_1$ only extends the head of the chain, the optimal strategy that she can deploy must be of a form in which $\mc{A}_1$ alternates between the following two modes. In the first mode, $\mc{A}_1$ operates as an honest miner, meaning that she publishes every block $B_i$ that she creates, regardless of whether $B_i$ is a ``helpful'' or ``detrimental'' block. In the second mode, $\mc{A}_1$ inspects the LSB of each block $B_i$ that she successfully created, and publishes $B_i$ only if it is a ``helpful'' block, otherwise she discards $B_i$. Suppose for the moment that $\mc{A}_1$ never switches to the first mode after she discards a ``detrimental'' block but before the next turn of $\nofork$. The strategy of $\mc{A}_1$ can thus be viewed as adaptively selecting $d\leq n$ locations in which she plays as an honest miner, and in the other $n-d$ locations $\mc{A}_1$ discards all the ``detrimental'' blocks that she creates. Hence, while operating in the second mode, $\mc{A}_1$ will successfully create the block and collect the reward of each location with probability $\frac{1}{2}p+(\frac{1}{2}p)^2+(\frac{1}{2}p)^3+\ldots=\frac{\nicefrac{p}{2}}{1-\nicefrac{p}{2}}=p'$. To decide the outcome of $\ell$ locations, $\mc{A}_1$ needs $\ell$ successes that occur with probability $p'$ each, because a failed trial does not deny the location from having a uniform random symbol. Let $Y=Y(\ell,p')$ be a random variable with negative binomial distribution that counts the number of trials until $\ell$ successes. By using a tail inequality \cite{WWW:browndg,books:daglib0025902} for the \ariel{add `negative'?}binomial distribution $B(\delta\frac{1}{p'}\ell,p')$, we obtain $$\pr(Y<\delta\cdot \expect[Y])=\pr(Y<\delta\frac{1}{p'}\ell)=\pr(B(\delta\frac{1}{p'}\ell,p') > \ell)\leq e^{-\frac{1}{3}(\frac{1}{\delta}-1)^2 \delta\ell}=p_0.$$ If the event $E_2\triangleq\{Y\geq\delta\frac{1}{p'}\ell\}$ occurs then the gains and losses of $\mc{A}_1$ until she influences the majority function $\ell$ times will sum up to $S=\ell x + \delta\frac{1}{p'}\ell(-y_p)=\delta\frac{1}{p'}\ell w_p$ coins or less. Notice that $\mc{A}_1$'s average cost per trial is in fact greater than $y_p$, because each trial may involve several turns.
Also note that if the success probability of $\mc{A}_1$ was $p$ rather than $p'$ then the sum $S$ would be a multiple of $\frac{1}{\delta}px-y_p>z_p$ and hence positive\ariel{not following this last line}. Let us refer to the blocks that $\mc{A}_1$ created while operating in the second mode as {\em marked} blocks, and let $E_3\triangleq\{\mc{A}_1$ contributed at most $\ell$ marked blocks among $B_1,B_2,\ldots,B_n\}$. According to Assumption~\ref{assume:maxprof}, the total amount of coins that $\mc{A}_1$ may earn while playing honestly in the $d$ locations is bounded by $\maxcoins(t_1,d)\leq\maxcoins(t_1,n)$, and hence $T(n)$ represents the maximal budget that $\mc{A}_1$ has for playing in the $n-d$ locations. Since we assume that $T(n)+S<0$ and $\mc{A}_1$ does not go under budget, we have that $E_2\subseteq E_3$ holds, because the occurrence of $E_2$ implies that $\mc{A}_1$ stops operating after she played in at most $\ell$ out of the $n-d$ locations.
Note that if $\delta\frac{1}{p'}\ell > n-d$ then the occurrence of $E_2$ implies that $\mc{A}_1$ failed to reach $\ell$ successes irrespective of her $T(n)$ budget, meaning that $E_2\subseteq E_3$ in this case too.

Let $E_4\triangleq\{$the amount of marked blocks that $\mc{A}_1$ contributed among $B_1,B_2,\ldots,B_n$ is at least as large as the difference between ``helpful'' and ``detrimental'' blocks among the unmarked blocks of $B_1,B_2,\ldots,B_n\}$. Thus, following Lemma \ref{lemma:majextr}, we have that $\pr(E_4|E_3)\leq \eps$.

Let $E_1\triangleq \{\mathsf{majority}(b_1,b_2,\ldots,b_n)=1\}$.
Since a tie among $n-\ell$ uniform random bits occurs with probability $\frac{\euler}{\pi}\frac{1}{\sqrt{n-\ell}}$ at the most (cf. Lemma \ref{lemma:majextr}), we have $\pr(E_1|\lnot E_4 \cap E_3)\leq\frac{1}{2}+\frac{\euler}{\pi}\frac{1}{\sqrt{n-\ell}}$. \ariel{Not following 100 percent.
Does the event $E_2$ mean adversary created at most $\ell$ blocks out of the $n$?}Hence, we obtain

\begin{eqnarray*}
\pr(E_1) &\leq& \pr(E_1 \cap E_3) + \pr(E_1 \cap \lnot E_3) \leq \pr(E_1 \cap E_3) + \pr( \lnot E_3)\\ 
&\leq& \pr(E_1 \cap E_3\cap E_4) + \pr(E_1 \cap E_3\cap \lnot E_4) +\pr( \lnot E_2)\\ 
&\leq& \pr(E_4|E_3) + \pr(E_1|\lnot E_4 \cap E_3) + p_0\\
& \leq & \eps +\bigl(\frac{1}{2}+ \frac{\euler}{\pi}\frac{1}{\sqrt{n-\ell}}\bigr) + p_0.
\end{eqnarray*}
Now, if $\mc{A}_1$ is allowed to switch to the first mode after she discards a ``detrimental'' block but before the next turn, we can still count such a trial as a loss of $y_p$ coins. This means that in case $\mc{A}_1$ succeeds to create the next block it will be accounted for as playing honestly in one of the $d$ locations, and therefore the total number of trials can exceed $n-d$ (implying that the success probability in a trial is less than $p'$). Note that $\mc{A}_1$ should never switch from the first mode to the second mode before she succeeds to create a block, since the only purpose of operating in the first mode is to avoid going under budget. Therefore, $\pr(E_1)\leq \eps + \frac{1}{2}+ \frac{\euler}{\pi}\frac{1}{\sqrt{n-\ell}} + p_0$ holds even if $\mc{A}_1$ may adaptively switch between modes in the course of a trial.

Overall, the statistical distance between the output bit and a uniform random bit is at most 
$$\left\lvert\pr(E_1)-\frac{1}{2}\right\rvert\leq \left\lvert\eps+\frac{1}{2}+ \frac{\euler}{\pi}\frac{1}{\sqrt{n-\ell}} + p_0-\frac{1}{2}\right\rvert=   \eps+p_0+\frac{\euler}{\pi}\frac{1}{\sqrt{n-\ell}}.$$
\end{proof}

The reason why the above protocol can terminate immediately after the $n^\text{th}$ block has been fetched is that even unintentional forks do not occur in $\nofork$, hence none of the $n$ blocks can be reversed.


\medskip

Let us now broaden Lemma \ref{claim:upbound1} to a model that captures Bitcoin. The model that we employ is the {\em Bitcoin backbone} formulation, that was proposed in \cite{EC:GKL15}. In this setting, an adversary with $p$ fraction of the mining power may be able to create more than $p$ fraction of the blocks in expectation, due to adversarial strategies that take advantage of blockchain forks (cf. \cite{FC:ES14,FC:SSZ16}). Yet, Assumption \ref{assume:maxprof} is reasonable in this model too, per the discussion in Section \ref{sec:profitmargin}.
\begin{overview}\label{overview:backbone}
\normalfont{
In the framework of \cite{EC:GKL15}, there are $N$ parties (or $N$ miners), and 
an adversary $\mc{A}$ is allowed to statically corrupt $t$ of them
and use their quota of hashing queries. 
Hash invocations in \cite{EC:GKL15} are modeled as random oracle queries, and hence the mining power
of $\mc{A}$ is $\beta\triangleq tqT/2^{\kappa}$, where $q$ is the quota of queries that
a miner has in each round of an execution of the backbone protocol, $T$ is the difficulty target, and $\kappa$ is
the length of the digest. The mining power of the honest miners is $\alpha\triangleq (N-t)qT/2^{\kappa}$,
and $\gamma\triangleq\alpha^2-\alpha$ represents a lower bound on the power of the honest miners after unintentional
forks are taken into consideration (see \cite[Section~4.1]{EC:GKL15} for further details).
}
\end{overview}
Thus, $\alpha$ and $\beta$ correspond to $1-p$ and $p$ in the model $\nofork$.

In \cite{EC:GKL15}, the adversary $\mc{A}$ together with the environment $\mc{Z}$ can orchestrate
an attack against the protocol 
execution
in any way feasible, independently
of the actual cost that the attack might be worth.
Therefore, we augment \cite{EC:GKL15} with the following definitions, that serve as the basis for the interpretations that we will soon provide.

\begin{definition}
Let $\mc{B}_{B,\ell,L}$ be a predicate that is parameterized by a block $B$ and lengths $\ell$ and $L$. Given a chain $\chain$ and an adversary $\mc{A}$, let $\chain^{\mc{A}}$ denote an annotated chain that specifies for each of its blocks whether it was created by $\mc{A}$. We say that $\mc{B}_{B,\ell,L}$ is a bankruptcy predicate if for any adversary $\mc{A}$ and for any chain $\chain$, it holds that $\mc{B}_{B,\ell,L}(\chain^{\mc{A}})=1$ iff $\chain^{\mc{A}}$ contains a segment of $L$ consecutive blocks that starts with $B$, and $\mc{A}$ created at most $\ell$ out of these $L$ blocks.
\end{definition}

\begin{definition}\label{dfn:bankevent}
An execution of the backbone protocol is said to have a bankruptcy event w.r.t. an adversary $\mc{A}$ and a bankruptcy predicate $\mc{B}_{B,\ell,L}$ if it holds that at some round $r$ of the execution, every honest party $P$ adopts a chain $\chain^{\mc{A}}_P$ for which $\mc{B}_{B,\ell,L}(\chain^{\mc{A}}_P)=1$.
\end{definition}

Let us clarify that Definition \ref{dfn:bankevent} does not imply anything about the level of anonymity in the system. That is to say, an honest party does not necessarily know which blocks were created by $\mc{A}$, but due to the agreement among all honest parties we have that w.h.p. $\mc{A}$ will not collect more than $\ell$ rewards in a segment of $L$ blocks that starts at the block $B$.

\smallskip
The beacon protocol $\bprot$ is given in Figure~\ref{fig:bprot}. The following \claimtwo shows that an adversary who discards ``detrimental'' blocks will either fail to influence $\bprot$, or become bankrupt while trying. In the denotations of this \claimtwo, the event $E_{b}\cap\lnot E_{\ell}$ implies that the adversary continued to operate after she went bankrupt. While the Bitcoin backbone formulation itself does not prevent this possibility, one may consider an extended model in which \mbox{$\pr(E_{b}\cap\lnot E_{\ell})=0$}.

\begin{figure}[btb]
\begin{minipage}{\textwidth}
\begin{framed}
{
\small
\begin{center} Protocol $\bprot$ \end{center}
\begin{enumerate}
\item Fetch $n$ consecutive blocks $B_1,B_2,\ldots,B_n$ from the Bitcoin blockchain, such that $B_n$ has already been extended by $\kparam$ additional blocks.
\item Obtain a bit $b_i$ from each $B_i$ by taking the LSB of the symbol that is associated with $B_i$. 
\item Output the bit $\mathsf{majority}(b_1,b_2,\ldots,b_n)$.
\end{enumerate}
}
\end{framed}
\end{minipage}
\caption{The beacon protocol.}\label{fig:bprot}
\end{figure}

\begin{lemma}\label{claim:upbound2}
Let $\eps>0, n\geq 1, k\geq 1$ be arbitrary.
Following Overview~\ref{overview:backbone}, assume that the mining power of the adversary $\mc{A}_2$ is $\beta$, and that $\gamma = \lambda(1+\delta) \beta$ is satisfied for some $\delta\in(0,1), \lambda \in [1,\infty)$.
Denote $\ell\triangleq \floor{\frac{\eps}{2}\frac{\pi}{\euler}\sqrt{n-\sqrt{n}}}, L\triangleq \ceil{2(1-\frac{\delta}{3})^{-1} \lambda \ell}$.
Suppose that an honest party invokes the protocol $\bprot$.
Let $E_{b}$ be the bankruptcy event w.r.t. $\mc{A}_2$ and the predicate $\mc{B}_{B_1,\ell,L}$, and let $E_{\ell}\triangleq\{\mc{A}_2 \text{\ contributed at most $\ell$ of the blocks $B_1,B_2,\ldots,B_n$}\}$. If $\mc{A}_2$ never publishes any of the ``detrimental'' blocks that she creates, then $\bprot$ outputs a bit whose statistical distance from a uniform random bit is $\eps+ \pr(E_{b}\cap\lnot E_{\ell})+(\Omega(\sqrt{n}))^{-1} + e^{-\Omega(\delta^2 \ell)} + e^{-\Omega(\kparam)}$ at the most.
\end{lemma}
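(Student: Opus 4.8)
The plan is to follow the template of the proof of Lemma~\ref{claim:upbound1}, replacing its ``forkless'' bookkeeping with the common‑prefix and chain‑quality guarantees of the Bitcoin backbone (Overview~\ref{overview:backbone}, \cite{EC:GKL15}). As before, we may assume w.l.o.g.\ that $\mc{A}_2$ wishes to bias the output toward $1$, and write $E_1=\{\mathsf{majority}(b_1,\dots,b_n)=1\}$. Since $\mc{A}_2$ withholds only ``detrimental'' blocks and never a ``helpful'' one, no strategy of hers can pull $\Pr(E_1)$ below the honest value, so $\Pr(E_1)\ge\tfrac{1}{2}-(\Omega(\sqrt n))^{-1}$ holds automatically; it therefore suffices to prove the upper bound $\Pr(E_1)\le\tfrac{1}{2}+\eps+(\Omega(\sqrt n))^{-1}+\Pr(E_b\cap\lnot E_{\ell})+e^{-\Omega(\delta^2\ell)}+e^{-\Omega(\kparam)}$, from which the claimed statistical distance follows by subtracting $\tfrac{1}{2}$. (We keep $\Pr(E_b\cap\lnot E_{\ell})$ as a term precisely because the model of \cite{EC:GKL15} does not enforce budgets; under Assumption~\ref{assume:maxprof} one would set it to $0$ in the extended model mentioned after Definition~\ref{dfn:bankevent}.)

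First, on the event $E_{\ell}$ at most $\ell$ of the bits $b_1,\dots,b_n$ are chosen by $\mc{A}_2$ while the remaining $\ge n-\ell$ are uniform. Exactly as in Lemma~\ref{claim:upbound1} --- introducing the auxiliary event that $\mathsf{majority}$ is already pinned down by the uniform coordinates (so that the adaptively placed helpful bits cannot matter), applying Lemma~\ref{lemma:majextr} with parameter $\eps/2$ to this non‑oblivious bit‑fixing behavior, and charging the collision bound $\tfrac{\euler}{\pi}\tfrac{1}{\sqrt{n-\ell}}=(\Omega(\sqrt n))^{-1}$ for the tie case --- we obtain $\Pr(E_1\mid E_{\ell})\le\tfrac{1}{2}+\eps+(\Omega(\sqrt n))^{-1}$. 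Writing $\Pr(E_1)\le\Pr(E_1\mid E_{\ell})+\Pr(\lnot E_{\ell})$ and $\Pr(\lnot E_{\ell})=\Pr(E_b\cap\lnot E_{\ell})+\Pr(\lnot E_b\cap\lnot E_{\ell})$, the whole claim reduces to showing $\Pr(\lnot E_b\cap\lnot E_{\ell})\le e^{-\Omega(\delta^2\ell)}+e^{-\Omega(\kparam)}$.

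For this, apply the common‑prefix property of \cite{EC:GKL15}: because $B_n$ is buried under $\kparam$ additional blocks, except with probability $e^{-\Omega(\kparam)}$ the blocks $B_1,\dots,B_n$ --- hence, since $L\le n$ in the range where the bound is non‑vacuous, also $B_1,\dots,B_L$ --- lie forever in the common prefix, so the $L$‑block segment starting at $B_1$ in Definition~\ref{dfn:bankevent} equals $B_1,\dots,B_L$; condition on this ``stability'' event. Then $\lnot E_b$ asserts that $\mc{A}_2$ created more than $\ell$ of $B_1,\dots,B_L$. This is the step where the parameters $\ell$, $L$ and the relation $\gamma=\lambda(1+\delta)\beta$ are spent, playing the role that the budget inequality $T(n)+\delta\tfrac{1}{p'}\ell w_p<0$ plays in Lemma~\ref{claim:upbound1}: since $\mc{A}_2$ discards every block whose LSB is $0$, a Chernoff bound for $B(\cdot,\tfrac{1}{2})$ shows that, except with probability $e^{-\Omega(\delta^2\ell)}$, her total chain contribution over the span in which $B_1,\dots,B_L$ are built is at most a $(1+\tfrac{\delta}{3})$‑factor above half her raw block‑production count --- i.e.\ her effective mining power is at most $(1+\tfrac{\delta}{3})\beta/2$. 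Feeding this effective power into the chain‑quality property of \cite{EC:GKL15} against the honest‑after‑forks power $\gamma=\lambda(1+\delta)\beta$ bounds $\mc{A}_2$'s share of $B_1,\dots,B_L$ by, up to a chain‑quality deviation term that fails with probability $e^{-\Omega(\delta^2 L)}=e^{-\Omega(\delta^2\ell)}$, roughly $\tfrac{(1+\delta/3)\beta/2}{\gamma}\cdot L=\tfrac{(1+\delta/3)L}{2\lambda(1+\delta)}\le\ell$, the last inequality using $L=\ceil{2(1-\tfrac{\delta}{3})^{-1}\lambda\ell}$ together with $(1-\tfrac{\delta}{3})(1+\delta)=1+\tfrac{\delta(2-\delta)}{3}\ge 1+\tfrac{\delta}{3}$ (and the ceiling costs only an additive $O(1)$, which is absorbed since block counts are integers). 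This contradicts $\lnot E_b$, so $\lnot E_b\cap\lnot E_{\ell}$ is contained in the union of the common‑prefix‑failure, Chernoff‑failure and chain‑quality‑failure events, yielding $\Pr(\lnot E_b\cap\lnot E_{\ell})\le e^{-\Omega(\delta^2\ell)}+e^{-\Omega(\kparam)}$ and hence the lemma. The main obstacle is exactly this last step: correctly ``pricing'' the discard strategy as an effective mining power of $\beta/2$ and threading it through the backbone's worst‑case chain‑quality guarantee with enough slack to beat $L$, in a way that is robust to $\mc{A}_2$ interleaving honest mining with withholding and attempting forks --- robustness that the backbone analysis of \cite{EC:GKL15} supplies.
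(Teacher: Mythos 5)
Your proposal matches the paper's proof in all essential ingredients: Lemma~\ref{lemma:majextr} controls the bit‑fixing bias, the common‑prefix and chain‑quality theorems of \cite{EC:GKL15} supply the $e^{-\Omega(\kparam)}$ and $e^{-\Omega(\delta^2\ell)}$ terms, and the ``discard detrimental blocks'' strategy is priced as an effective mining power of $\beta/2$ (the paper simply relabels $\beta'=\beta/2$, $\lambda'=2\lambda$; you insert an extra Chernoff layer on top, which is a defensible extra degree of care but the paper treats this as exact and folds any deviation into the slack $\delta$). The principal structural difference is bookkeeping: the paper introduces an intermediate ``success'' event $E_s$ (the adversary's contribution meets or exceeds the honest miners' helpful‑minus‑detrimental excess), bounds $\pr(\lnot E_b\cap E_s)$ via auxiliary events $E_q=\{m\ge L\}$ and $E_c$ (common prefix), and finally compares $\pr(E_0)$ against $\pr(E_0\cup(E_b\cap\lnot E_\ell))$; you instead condition directly on $E_\ell$ and split $\lnot E_\ell$ by $E_b$. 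Both routes give the same error budget, and yours is more direct. (A small numerical point: with $\ell$ defined via $\eps/2$, Lemma~\ref{lemma:majextr} yields $\pr(E_1\mid E_\ell)\le\tfrac12+\tfrac{\eps}{2}+\mathrm{tie}$; you wrote $\eps$, which is still within the claimed bound, just not tight.)

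One genuine soft spot: you dispose of the case $L>n$ with ``since $L\le n$ in the range where the bound is non‑vacuous,'' but $L=\ceil{2(1-\frac{\delta}{3})^{-1}\lambda\ell}$ need not be below $n$ (e.g.\ for large $\lambda$ relative to $\eps$), and your common‑prefix argument --- that $B_1,\dots,B_L$ are stable because $B_n$ is buried under $\kparam$ extra blocks --- simply does not apply when $B_L$ lies beyond $B_n$, since the protocol never waits for confirmations past $B_{n+\kparam}$. The paper covers this with a short separate case: when $L>n$, the $\ell$‑th adversarial block $B_m$ satisfies $m\ge L>n$ whenever $E_q$ holds, so $E_q\subseteq E_\ell$, hence $\pr(\lnot E_\ell)\le\pr(\lnot E_q)\le e^{-\Omega(\delta^2\ell)}$ without invoking $E_c$ at all. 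Your target inequality $\pr(\lnot E_b\cap\lnot E_\ell)\le e^{-\Omega(\delta^2\ell)}+e^{-\Omega(\kparam)}$ is in fact correct in both regimes, but your argument only establishes it in one of them; you need the second‑case observation $E_q\subseteq E_\ell$ to close the gap.
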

\begin{proof}
Since $\mc{A}_2$ discards all of the ``detrimental'' blocks that she obtains from the random oracle, she can be regarded as an adversary whose mining power is $\beta'=\nicefrac{\beta}{2}$. Thus, for $\lambda'=2\lambda$, the equality $\gamma = \lambda'(1+\delta) \beta'$ holds.

Let $E_{s}=\{\mc{A}_2$ contributed at least as many blocks among $B_1,B_2,\ldots,B_n$ as the difference between ``helpful'' and ``detrimental'' blocks that the honest miners contributed among $B_1,B_2,\ldots,B_n\}$. We have ``s'' in $E_{s}$ to indicate that $\mc{A}_2$ succeeded to influence the output bit.

Since we defined $\ell$ with $\frac{\eps}{2}$ instead of $\eps$, Lemma \ref{lemma:majextr} implies that $\pr(E_{s}|E_{\ell})\leq\frac{\eps}{2}$ holds. 

For the chain of consecutive blocks $B_1,B_2,\ldots,B_n,\ldots$ that an honest party fetches and thereby derives the output of the protocol, let $B_m$ be the $\ell^\text{th}$ block that $\mc{A}_2$ contributed. In other words, $\mc{A}_2$ contributed $\ell$ of the first $m$ blocks, and the honest miners created the other $m-\ell$ blocks. Denote $E_{q}\triangleq\{m\geq L\}$. According to \cite[Theorem 10]{EC:GKL15}, the chain quality property entails that $\pr(E_{q})=1-e^{-\Omega(\delta^2 m)} \geq 1-e^{-\Omega(\delta^2 \ell)}$. In fact, \cite[Theorem~10]{EC:GKL15} guarantees an even stronger property, see Remark \ref{rem:supply}. The letter ``q'' in $E_{q}$ refers to the quality of the segment $B_1,B_2,\ldots,B_L$. 

Let $r$ denote the round at which the protocol $\bprot$ terminated. Let us define $E_{c}\triangleq\{\text{At round} \ r, \ \text{the block} \ B_n \ \text{is included in every chain of of every honest party}\}$. Due to the common prefix property \cite[Theorem~9]{EC:GKL15}, it holds that $\pr(E_{c})=1-e^{-\Omega(\kparam)}$. We have ``c'' in $E_{c}$ to indicate that the common prefix $B_1,B_2,\ldots,B_n$ is in consensus.

Let us consider two cases.

For the first case, we assume that $L\leq n$. Here, it holds that $\lnot E_{b} \cap E_{q} \subseteq \lnot E_{c}$, because $E_{q}\cap E_{c} \subseteq E_{b}$. The reason that $E_{q}\cap E_{c} \subseteq E_{b}$ holds is as follows. If $E_{c}$ occurred, then for every honest party $P$ we have that the chain $\chain_{P,r}$ that $P$ adopted at round $r$ includes $B_1,B_2,\ldots,B_n$. If $E_{q}$ occurred, then for any chain $\chain$ that contains $B_1,B_2,\ldots,B_n$ we have $\mc{B}_{B_1,\ell,L}(\chain^{\mc{A}_2})=1$. Hence, if both $E_{q}$ and $E_{c}$ occurred, then at round $r$ every honest party $P$ adopted a chain $\chain_{P,r}$ such that $\mc{B}_{B_1,\ell,L}(\chain^{\mc{A}_2}_{P,r})=1$. This implies that $E_{b}$ occurred.

By the law of total of total probability, we obtain
\begin{eqnarray*}
\pr(\lnot E_{b}\cap E_{s}) &=& \pr( (\lnot E_{b} \cap E_{s} \cap E_{\ell}) \cup (\lnot E_{b}\cap E_{s} \cap\lnot E_{q}) \cup (\lnot E_{b}\cap E_{s} \cap \lnot E_{\ell} \cap E_{q})) \\
& \leq & \pr(E_{s}\cap E_{\ell}) + \pr(\lnot E_{q})  + \pr(\lnot E_{b}\cap E_{q}) \\
& \leq & \pr(E_{s}|E_{\ell})\pr(E_{\ell}) + e^{-\Omega(\delta^2 \ell)} + \pr(\lnot E_{c}) \\
& \leq & \pr(E_{s}|E_{\ell}) + e^{-\Omega(\delta^2 \ell)} + e^{-\Omega(\kparam)}\leq \hspace{2pt}\frac{\eps}{2}+ e^{-\Omega(\delta^2 \ell)} + e^{-\Omega(\kparam)}.
\end{eqnarray*}

For the second case, we assume that $L> n$. Here, $E_{q}\subseteq E_{\ell}$. Therefore,
\begin{eqnarray*}
\pr(E_{s}) &=& \pr( (E_{s}\cap E_{\ell}) \cup \pr(E_{s}\cap\lnot E_{\ell}) \\
& \leq & \pr(E_{s}|E_{\ell})\pr(E_{\ell}) + \pr(E_{s}\cap\lnot E_{\ell}) \\
& \leq & \pr(E_{s}|E_{\ell}) + \pr(\lnot E_{\ell}) \\
& \leq & \frac{\eps}{2} + \pr(\lnot E_{q}) \leq\hspace{2pt} \frac{\eps}{2} + e^{-\Omega(\delta^2 \ell)}.
\end{eqnarray*}

Thus, the second case implies an even smaller bound on $\pr(\lnot E_{b}\cap E_{s})$.

Let $E_0\triangleq\{\mathsf{majority}(b_1,b_2,\ldots,b_n)=0\}$. Since $\lnot E_{s}$ implies that more than half of the bits $b_1,b_2,\ldots,b_n$ are uniform random, and since a tie among $\nicefrac{n}{2}$ uniform random bits occurs with probability $\frac{\euler}{\pi}\frac{1}{\sqrt{n/2}}$ at the most (cf. Lemma \ref{lemma:majextr}), we have $\pr(\lnot E_0 \cap \lnot E_{s})\leq\frac{1}{2}+\frac{\euler}{\pi}\frac{1}{\sqrt{n/2}}$. Therefore,
\begin{eqnarray*}
\pr(\lnot E_0 \cap (\lnot E_{b}\cup E_{\ell})) &=&  \pr(\lnot E_0 \cap (\lnot E_{b}\cup E_{\ell})\cap E_{s})+\pr(\lnot E_0 \cap (\lnot E_{b}\cup E_{\ell})\cap\lnot E_{s})\\
&\leq&  \pr((\lnot E_{b}\cup E_{\ell})\cap E_{s})+\pr(\lnot E_0 \cap \lnot E_{s})\\
&\leq& \pr(E_{s}|E_{\ell})\pr(E_{\ell})+\pr(\lnot E_{b}\cap E_{s})+\pr(\lnot E_0 \cap \lnot E_{s})\\
& \leq & \frac{\eps}{2}+ \bigl(\frac{\eps}{2}+ e^{-\Omega(\delta^2 \ell)} + e^{-\Omega(\kparam)}\bigr) + \bigl(\frac{1}{2}+\frac{\euler}{\pi}\frac{1}{\sqrt{n/2}}\bigr).\\
\end{eqnarray*}
This implies
\begin{eqnarray*}
\pr(E_0)+\pr(E_{b}\cap\lnot E_{\ell})\geq\pr(E_0 \cup (E_{b}\cap\lnot E_{\ell})) \geq \frac{1}{2}-\frac{\euler}{\pi}\frac{1}{\sqrt{n/2}} -\eps- e^{-\Omega(\delta^2 \ell)} - e^{-\Omega(\kparam)}.\\
\end{eqnarray*}
Since $\mc{A}_2$ tries to bias the output towards $1$, we obtain that the statistical distance between the output bit and a uniform random bit is at most
\begin{eqnarray*}
\Bigl\lvert\pr(E_0)-\frac{1}{2}\Bigr\rvert
&\leq& \Bigl\lvert\frac{1}{2}-(\eps+\pr(E_{b}\cap\lnot E_{\ell})+\frac{\euler}{\pi}\frac{1}{\sqrt{n/2}}+ e^{-\Omega(\delta^2 \ell)}+ e^{-\Omega(\kparam)})-\frac{1}{2}\Bigr\rvert \\
&=& \eps+\pr(E_{b}\cap\lnot E_{\ell})+\frac{\euler\sqrt{2}}{\pi}\frac{1}{\sqrt{n}}+ e^{-\Omega(\delta^2 \ell)}+ e^{-\Omega(\kparam)}.\\
\end{eqnarray*}

\end{proof}

\begin{remark}
\claimtwo \ref{claim:upbound2} also guarantees agreement on the output bit among all the honest miners. Let $j$ denote the distance from the genesis block to $B_1$, and consider the event $E_a\triangleq\{B_1,B_2,\ldots,B_n \text{\ reside in every chain of length $j+n+k$ or more that an honest miner adopts}\}$. According to \cite[Theorem~9]{EC:GKL15}, we have $\pr(E_a)=1-e^{-\Omega(\kparam)}$. Therefore, $\pr(\lnot E_0\lor\lnot E_a)\leq \pr(\lnot E_0)+\pr(\lnot E_a)\leq \frac{1}{2}+\eps+\pr(E_{b}\cap\lnot E_{\ell})+(\Omega(\sqrt{n}))^{-1}+ e^{-\Omega(\delta^2 \ell)} + \bcancel{2} e^{-\Omega(\kparam)}$. This implies that the output bit of the beacon is publicly verifiable in the following sense: any honest party can verify the validity of some $n+k$ consecutive blocks and compute the bit that the first $n$ blocks derive, and have the assurance that this bit is close to uniform and that w.h.p. every other honest party will derive the same bit.
\end{remark}

It should be noted that the hidden constants in the terms $e^{-\Omega(\kparam)}$ and $e^{-\Omega(\delta^2 \ell)}$ depend on a parameter $f$ that measures the synchronicity of the honest miners, as can be seen by inspecting \cite[Theorem~9]{EC:GKL15} and \cite[Theorem~10]{EC:GKL15}. However, with $10$ minutes average block interval as in Bitcoin, and reasonable assumptions on the network propagation latency \cite{P2P:DW13}, the probability of forks by the honest miners is less than $3\%$, and therefore the constant $f$ is quite modest. 

\begin{observation}\label{rem:nonstop}
We may suppose w.l.o.g. that $\mc{A}_2$ ceaselessly tries to create blocks for as long as she is not bankrupt, since an adversary $\mc{A}_0$ who opts to remain idle (instead of submitting queries to the random oracle at some rounds of the backbone protocol execution) gains nothing by doing so. This is because the majority function is influenced only by the total amount of blocks that $\mc{A}_0$ contributes among the $n$ blocks, and therefore it is better to create the blocks as soon as possible. Hence, $\mc{A}_2$ is at least as powerful w.r.t. influencing the output bit of $\bprot$ as $\mc{A}_0$ is.
\end{observation}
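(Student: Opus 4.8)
The plan is to prove \obsref{rem:nonstop} by a coupling/simulation argument: from an arbitrary adversary $\mc{A}_0$ that leaves some of its random-oracle queries unused in some rounds, I would construct a ceaselessly-mining adversary $\mc{A}_2$ whose induced distribution on the honest parties' chains — and hence on the output bit of $\bprot$ and on all the events appearing in \lemmaref{claim:upbound2} — is identical to that of $\mc{A}_0$. Concretely, $\mc{A}_2$ runs an internal copy of $\mc{A}_0$; in each round $\mc{A}_2$ forwards to the true random oracle exactly the queries that $\mc{A}_0$'s copy issues, and additionally spends its remaining quota on fresh inputs that $\mc{A}_0$ never touches. The answers to the former are fed verbatim back to the internal copy of $\mc{A}_0$; the blocks produced by the latter are simply withheld (never diffused), which is a legitimate adversarial action since in the backbone model of \cite{EC:GKL15} the adversary fully controls its corrupted parties and may choose not to diffuse any message. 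Thus $\mc{A}_2$ publishes, adopts and reorganizes chains exactly as $\mc{A}_0$ would.

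Next I would verify that injecting the extra queries leaves the rest of the execution undisturbed. The extra queries are on distinct inputs, so in the random-oracle model their answers are independent of everything the internal $\mc{A}_0$ sees; the honest miners' hashing, message diffusion and chain-selection rules in \cite{EC:GKL15} are oblivious to how many queries the adversary spends in a round; and the round schedule is fixed. Hence, under the coupling, the messages delivered to each honest party in every round — and therefore the chain $\chain_{P,r}$ that each honest party $P$ adopts at the termination round $r$ — agree bit for bit with the execution against $\mc{A}_0$. Consequently the derived bit $\mathsf{majority}(b_1,\ldots,b_n)$ and the joint law of $E_{b}$, $E_{\ell}$, $E_{s}$, $E_0$, etc., coincide for $\mc{A}_2$ and $\mc{A}_0$, so in particular the achieved bias $\bigl\lvert\pr(E_0)-\frac{1}{2}\bigr\rvert$ is the same.

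Finally I would conclude. The above shows that every output distribution achievable by an idling adversary is achievable by one that never idles, so the worst-case bias over ceaselessly-mining adversaries is at least the worst-case bias over all adversaries; the reverse inequality is trivial because ceaselessly-mining adversaries form a subclass. Hence it is without loss of generality to restrict attention to an $\mc{A}_2$ as in \lemmaref{claim:upbound2}. I would also record the slightly stronger statement implicit in the observation: rather than withholding the blocks from its extra queries, $\mc{A}_2$ may publish the helpful ones, which only increases the number of adversarial blocks among $B_1,\ldots,B_n$ and hence the chance of influencing the majority; and since the bankruptcy predicate $\mc{B}_{B_1,\ell,L}$ is a function of the annotated chain alone — not of how many queries the adversary made — idling cannot buy the adversary any protection against bankruptcy either, so there is no countervailing benefit to mining less. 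The one point needing genuine care, and the main (if modest) obstacle, is the coupling bookkeeping: one must check that the extra queries do not perturb the honest parties' view (which rests on independence of random-oracle answers on fresh inputs and on the scheduling being oblivious to the adversary's query count) and that ``withhold the extra block'' is a permissible move in the formal backbone adversary model.
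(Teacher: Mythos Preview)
Your proposal is correct. The paper itself offers no separate proof beyond the single sentence embedded in the observation (``the majority function is influenced only by the total amount of blocks that $\mc{A}_0$ contributes \ldots\ therefore it is better to create the blocks as soon as possible''), so your coupling argument is a genuine formalization rather than a departure.

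The one difference worth noting is in framing. The paper's sentence is a \emph{dominance} claim: since only the count of adversarial blocks among $B_1,\ldots,B_n$ matters, mining sooner weakly dominates mining later. Your argument is instead an \emph{exact simulation}: the ceaselessly-mining $\mc{A}_2$ that withholds its surplus blocks reproduces $\mc{A}_0$'s view bit for bit, so the achievable biases coincide as sets. Your route is cleaner because it sidesteps any need to argue about timing or about what ``better'' means in the presence of forks; it also makes transparent why the bankruptcy predicate is unaffected (it reads only the annotated chain, which is unchanged under your coupling). The paper's phrasing, by contrast, is slightly loose---``create the blocks as soon as possible'' conflates querying with succeeding---and your simulation avoids that imprecision. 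Your additional remark that $\mc{A}_2$ could publish the helpful surplus blocks rather than withhold them recovers the paper's dominance flavor as a corollary.
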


The above observation leads us to the following interpretation.

\begin{interpretation}\label{interpret:a1}
\normalfont{
Consider the adversary $\mc{A}_2$ of \claimtwo \ref{claim:upbound2}. Suppose that every block $B$ that $\mc{A}_2$ creates such that $B$ will be included in every chain of every honest party from a certain round, $\mc{A}_2$ earns $x$ coins. 
Suppose that for every block that will be included in every chain of every honest miner from a certain round, $\mc{A}_2$ loses either $y(\beta)$ or more coins.
Because Observation \ref{rem:nonstop} allows us to assume that $\mc{A}_2$ attempts to create every block, $y(\beta)$ represents the operating expenses of $\mc{A}_2$ for each block that she either succeeded or failed to create, and $x$ represents the reward that $\mc{A}_2$ earns for a block that she successfully created. For $\beta=\beta' N qT/2^{\kappa}$, denote $z_{\beta}\triangleq \beta' x - y(\beta)$. The quantity $z_{\beta}$ can be regarded as an upper bound on the ``honest'' profit margin of $\mc{A}_2$ per block, i.e., it is an approximation of the profit margin that $\mc{A}_2$ can initially expect in case she extends only the head of the chain. It is therefore reasonable to assume that $z_\beta>0$. Denote $w_{\beta}\triangleq \frac{1}{2} \frac{1}{\lambda}(1-\frac{\delta}{3}) x - y(\beta)$. As implied by \cite[Theorem~10]{EC:GKL15}, the quantity $w_{\beta}$ can be viewed as an upper bound on profit margin of $\mc{A}$ while she deploys any adversarial strategy but discards half of the blocks that she creates. Hence, in case $\mc{A}_2$ has an initial reserve of $R$ coins but she was only able to create at most $\ell$ of the first $L$ blocks in $\bprot$, she would be left with a supply of $R+\ell\cdot x - L\cdot y(\beta)=R+\ell(1-\frac{\delta}{3})\frac{1}{\lambda'} w_\beta$ or less coins. Suppose that $z_\beta$ is small enough so that $w_{\beta}<0$. Suppose that $\ell$ is large enough so that $R+\ell(1-\frac{\delta}{3})\frac{1}{\lambda'} w_\beta<0$ holds. We can now see that the formal bankruptcy event coincides with the interpretation that $\mc{A}_2$ depleted her entire initial budget of $R$ coins. Specifically, using the denotations of \claimtwo~\ref{claim:upbound2}, the occurrence of $E_{q}\cap E_{c}$ is interpreted to mean that $\mc{A}_2$ ran out of coins while trying to contribute as many blocks as possible among the first $L$ blocks. This is because $E_{c}$ implies that all honest miners would agree that the only coins that $\mc{A}_2$ earned are these $\leq \ell x$ coin rewards, and thus $\mc{A}_2$ would have no other coins of value.}
\end{interpretation}
For instance, consider $\delta=\frac{1}{100},\beta=\nicefrac{N}{5}\cdot qT/2^{\kappa},\lambda=3.9$, and let us suppose that $\mc{A}_2$ can deploy a strategy that enables her to earn up to $\frac{1}{3.9}(1-\frac{1}{300}) x - y(\beta) > z_{\beta}=\frac{1}{5}x-y(\beta)$ coins per block. However, when $\mc{A}_2$ discards half of her solved blocks, we expect her profits per block to be $\frac{1}{7.8}\frac{299}{300} x - y(\beta)$ at the most, for any adversarial strategy that she deploys.

\medskip
At this point, let us consider the broader range of adversaries. The next remark will come useful in the analysis that follows it.

\begin{remark}\label{rem:supply}
A closer examination of \cite[Theorem~10]{EC:GKL15} reveals that the mining power of $\mc{A}_2$ will allow her to create a supply of $\ell'$ blocks such that she will be able to include either all of these $\ell'$ blocks or less in a chain that consists of $L=(1-\frac{\delta}{3})^{-1} \lambda \ell$ consecutive blocks in total, and the probability that the size of this supply satisfies $\ell'>\ell$ is $e^{-\Omega(\delta^2 L )}$, and hence also $e^{-\Omega(\delta^2 \ell )}$ at the most\footnote{The parameter $\ell$ of \cite[Theorem~10]{EC:GKL15} is denoted by $L$ in our case.}.
Therefore, we can regard $\mc{A}_2$ as an even more powerful adversary who operates in two phases. In the first phase, $\mc{A}_2$ creates a private chain $\chain$ that starts from the genesis block, by submitting queries to the random oracle. Then, at whichever round that $\mc{A}_2$ desires, she switches to the second phase, where she picks any chain $\chain'$ of an honest party, and is then granted the unusual power of being able to instantly find a nonce that modifies the second block of $\chain$ so that it would extend the last block of $\chain'$ rather than the genesis block. Then, $\mc{A}_2$ broadcasts the valid chain $\chain'\rightarrow\chain$, and from then on she may only extend chains that include $\chain$.
\end{remark}

We are now ready to show that \claimtwo \ref{claim:upbound2} together with Assumption \ref{assume:maxprof} guarantee that when $n$ is large enough, {\em any} budget-restricted adversary will fail to influence the beacon. Keep in mind that this requires a realistic profits bound, otherwise the lower bound of Section \ref{sec:lowerbound} would apply (as illustrated in Example \ref{example:profits}).

\begin{interpretation}\label{interpret:a2}
\normalfont{
Consider an adversary $\mc{A}_3$ who may deploy any possible strategy, which implies that she may publish ``detrimental'' blocks. Suppose that the mining power of $\mc{A}_3$ is also $\beta$, and that she purchased this mining power with $R_1$ coins. Suppose that $\mc{A}_3$ has an additional reserve of $R_2$ coins. For the parameter $n$ that $\bprot$ is invoked with, suppose that the adversary $\mc{A}_2$ has a reserve of $R_2+\maxcoins(R_1,n)$ coins. Following Interpretation \ref{interpret:a1}, let us note that we expect $z_\beta$ to diminish over time, as can be inferred by Assumption \ref{assume:maxprof}. One way to think of it is that $y(\beta)$ grows because $\mc{A}_2$ needs to keep expending more resources in order to maintain $\beta$ level of mining power relative to the honest miners, by buying new mining equipment and repairing old equipment. Let us assume that $n_0$ and $\ell_0\triangleq \floor{\frac{\eps}{2}\frac{\pi}{\euler}\sqrt{n_0-\sqrt{n_0}}}$ are large enough so that the condition
$R_2+\maxcoins(R_1,n_0)+\ell_0(1-\frac{\delta}{3})\frac{1}{\lambda'} w_\beta<0$ holds. 
The adversary $\mc{A}_3$ may opt to include ``detrimental'' blocks that she created and increase her coins reserve this way, thereby allowing her to carry on playing before she runs out of coins, albeit at the cost of influencing the beacon in the wrong direction. However, according to Assumption \ref{assume:maxprof}, the maximal amount of coins that $\mc{A}_3$ can earn is $\maxcoins(R_1,n_0)$. In view of Remark~\ref{rem:supply}, we can regard $\mc{A}_2$ as building one secretive private chain. This means that it is never beneficial for $\mc{A}_2$ to broadcast a shorter chain that includes ``detrimental'' blocks that she solved in order to win an intermediate race against the honest miners. Therefore, the {\em only} possible advantage that $\mc{A}_2$ would have had if she was capable of including ``detrimental'' blocks in her chain is the ability to earn extra revenues in order to avoid going under budget. Since the initial budget of $\mc{A}_2$ has $\maxcoins(R_1,n_0)$ more coins than the initial budget of $\mc{A}_3$, it follows that although $\mc{A}_2$ must discard all of her ``detrimental'' blocks, it is less probable that she will run out of coins before $\mc{A}_3$ runs out of coins, than vice verse. Thus, it is more likely than not that $\mc{A}_2$ will contribute at least the same amount of ``helpful'' blocks as $\mc{A}_3$ contributes. Since $\mc{A}_2$ achieves a bias of $\eps+\pr(E_{b}\cap\lnot E_{\ell})+(\Omega(\sqrt{n}))^{-1}+ e^{-\Omega(\delta^2 \ell)} + e^{-\Omega(\kparam)}$ at the most, it thus follows that the bias that $\mc{A}_3$ achieves is also $\eps+\pr(E_{b}\cap\lnot E_{\ell})+(\Omega(\sqrt{n}))^{-1}+ e^{-\Omega(\delta^2 \ell)} + e^{-\Omega(\kparam)}$ or less.
Let us draw attention to the facts that $\mc{A}_2$ goes bankrupt except with negligible probability, while $\mc{A}_3$ does not necessarily go bankrupt. Yet, we have that both $\mc{A}_2$ and $\mc{A}_3$ are destined to fail in their attempt to influence the output bit of $\bprot$, unless they can continue to operate after they become bankrupt.
}
\end{interpretation}
Hence, Interpretation \ref{interpret:a2} demonstrates that for any adversary \adv with an arbitrary initial budget, we can set $n$ and $k$ to be large enough so that the bias of the beacon is arbitrarily small.
Concretely, to achieve bias of at most $\eps$, \claimtwo~\ref{claim:upbound2} requires $e^{-\Omega(\delta^2 \ell)}\leq\eps$ for $\ell= \floor{\frac{\eps}{2}\frac{\pi}{\euler}\sqrt{n-\sqrt{n}}}$. Since $e^{-\eps\sqrt{n}}\leq\eps \Leftrightarrow n\geq\frac{1}{\eps^2}\log^2\frac{1}{\eps}$, Interpretation~\ref{interpret:a2} implies that $n=O(\frac{1}{\eps^2}\log^2\frac{1}{\eps})$ blocks are enough, unless the budget and profits of \adv are large enough so that $R_2+\maxcoins(R_1,n)$ is greater than $\approx \frac{1}{2\lambda}(-w_\beta)\ell\approx \frac{1}{2\lambda}(-w_\beta)\log\frac{1}{\eps}$. Notice that in case $\maxcoins(R_1,n)=\Omega(\sqrt{n})$, Interpretation \ref{interpret:a2} does not imply that an arbitrarily small $\eps$ bias can be achieved. However, per Section \ref{sec:profitmargin}, it is reasonable to assume that $\maxcoins(R_1,n)\leq 2 R_1$.

\medskip
Let us note that Observation \ref{rem:nonstop} should hold even in more complex models. For instance, it would hold in a model in which an adversary \adv can gain a windfall of coins due to an external event that occurs with some likelihood. This is because of Remark \ref{rem:supply} and the fact that the majority extractor is influenced only by the total amount of locations that \adv controls, which together imply that \adv can play until she runs out of coins and then wait for the external event to occur and thereby negate her bankruptcy. Furthermore, $z_{\beta}>0$ also implies that Observation~\ref{rem:nonstop} would hold in complex models.

\medskip
A superficial review of \claimtwo \ref{claim:upbound2} may lead one to believe that the assumption that the adversary has less than $\frac{1}{2}$ of the mining power is inessential. While the constraint $\gamma = \lambda(1+\delta) \beta$ is required in order to invoke \cite[Theorem 9]{EC:GKL15} and \cite[Theorem 10]{EC:GKL15}, the transformation from $\lambda$ to $\lambda'=2\lambda$ may seem to imply that an adversary with mining power $\beta\approx\nicefrac{2}{3}\cdot{N}qT/2^{\kappa}$ can also be coped with. However, let us show that Assumption~\ref{assume:maxprof} may no longer make sense under these conditions. Let $\frac{1}{2}<\beta'<\frac{2}{3}$. An adversary $\mc{A}_4$ with mining power $\beta=\beta' N qT/2^{\kappa}$ succeeds with probability $1$ to create a private longest chain, by winning the race against the honest miners \cite{MAN:r12,MAN:n08}. This longest chain can be arbitrarily large, and it consist of blocks that only $\mc{A}_4$ contributed. Note that $\mc{A}_4$ would not discard any of the blocks that she creates while working on this private chain. Hence, following Interpretation~\ref{interpret:a1}, $\mc{A}_4$ collects a reward of $x$ coins per each of the blocks in this private chain. Although the honest profit margin is $z_\beta=\beta' x-y(\beta)$, these observations imply that the effective profit margin of $\mc{A}_4$ reaches $x-y(\beta)$ when she mounts an attack. Suppose that $\mc{A}_4$ acquired her $\beta$ mining power with $t$ coins. If Assumption~\ref{assume:maxprof} entails that $\maxcoins(t,n)=O(t)$ even in the case that $\mc{A}_4$ collects {\em all} of the block rewards, then Interpretation~\ref{interpret:a2} still holds. On the other hand, if we make the plausible assumption that every block costs $\mc{A}_4$ at most $x-c$ coins to produce, where $c$ is a constant, then Example \ref{example:profits} demonstrates that $\mc{A}_4$ can defeat $\bprot$. Overall, it may indeed be true that an adversary with up to $\approx\nicefrac{2}{3}$ of the mining power is still weak enough in a practical setting~(cf. Section~\ref{sec:profitmargin}), but such an adversary is likely to be too powerful when $\bprot$ is invoked with a sufficiently large $n$.

As a side note, let us remark that a similar reasoning to the above shows that an adversary $\mc{A}_5$ who operates in the Bitcoin backbone model with an infinite budget and more than $\nicefrac{1}{2}$ of the mining power can bias the output bit by the largest possible amount (i.e., a larger bias than that of the adversary in Section~\ref{sec:lowerbound}). The adversary $\mc{A}_5$ can create her own private chain that has at least $n+\kparam$ blocks and is longer than the chains that the honest parties have, and publish it only in case it derives the desired output bit for the beacon. This event happens with probability $\nicefrac{1}{2}$, because $\mc{A}_5$ succeeds to create the longest chain with probability $1$. Otherwise, $\mc{A}_5$ discards her chain, and proceeds to create a fresh longest chain that starts from where the discarded chain started. Even though it should take $\mc{A}_5$ much longer to create the second chain, she still succeeds with probability $1$, and therefore succeeds to obtain the desired beacon output with probability $\nicefrac{1}{2}$. Otherwise, she discards her chain again, and repeats the process. Hence, overall, $\mc{A}_5$ succeeds to set the beacon to the output that she desires with probability $\frac{1}{2}+\frac{1}{4}+\frac{1}{8}+\ldots=1$.

\medskip
While \claimtwo \ref{claim:upbound2} and Interpretation \ref{interpret:a2} establish that $\bprot$ is secure against any adversarial strategy, it can be useful to consider practical strategies that an adversary may use. In a usual setting, selfish mining strategies \cite{FC:ES14,EPRINT:L13,FC:SSZ16} may become profitable only after at least one difficulty readjustment occurs (in Bitcoin this happens after a $2016$ blocks window that takes $\approx$ 2 weeks). This is because a selfish miner would have had less orphaned blocks if she dedicated her entire mining power to generating blocks at the head of the honest chain.
However, when the goal of the adversary is to influence the beacon rather than to earn revenues, a selfish mining strategy can be effective even before a difficulty readjustment occurs, because the proportional amount of blocks that the adversary contributes relative to the honest miners will already be larger.
That being said, our utilization of the extraordinary adversary that is described in Remark~\ref{rem:supply} was only for the purpose of the analysis, and in reality an adversary with less than $\nicefrac{1}{2}$ of the mining power cannot sustain a private chain for long. Thus, when considering an adversary with $\beta$ mining power and $\gamma = \lambda(1+\delta) \beta, \lambda\geq 1$ as in \claimtwo~\ref{claim:upbound2}, it is unlikely that this adversary can create $\nicefrac{1}{\lambda}$ portion of the blocks. To give a concrete example, the basic selfish mining strategy of \cite{FC:ES14} implies that an adversary with $\nicefrac{1}{3}$ of the mining power and no connectivity advantage can create $38.4\%$ of the blocks.

\subsection{Discussion of mining profitability}\label{sec:profitmargin}
Since the market entry costs for Bitcoin mining are low, the expected profit margin cannot be high. The reason for this is that if anyone could purchase a Bitcoin mining machine that creates money ad infinitum, then everyone would become a Bitcoin miner. Actually, it is also unreasonable to assume that Bitcoin mining is necessarily risk-free, as in reality it is in fact unprofitable to be a Bitcoin miner sometimes (see \cite{unprofit0,unprofit1,unprofit2,unprofit4,unprofit5} for some analysis and examples). 

Hence, an unexcessive growth rate for $\maxcoins(t,n)$ that satisfies $\forall n: \maxcoins(t,n)\leq 2t$ is more than reasonable, as most miners would be quite happy to double the return on their investment. Of course, our analysis can also accommodate a more lucrative bound, for example $\forall n: \maxcoins(t,n)\leq 100t$. In fact, Interpretation~\ref{interpret:a2} shows that our analysis holds for $\maxcoins(t,n)=o(\sqrt{n})$, though it is more natural to assume that the profits bound is a function of $t$.


The economic reasoning can also be elaborated upon as follows. Suppose that Alice buy a citrus juicer and starts a lemonade stand. The price at which Alice buys lemons is lower than the price at which she sells the lemonade, hence Alice makes a profit with each sale. While Alice's initial investment in the citrus juicer enables her to continuously accumulate profits, we do not expect Alice to become a billionaire as a result of this process. Our pessimistic prediction stems from the observations that the citrus juicer erodes over time, that Alice's profit margin is unlikely to be excessive, and that Alice's wealth will be measured relative to her competitors rather than in absolute terms. If Alice decides to drop the lemonade business and become a Bitcoin miner instead, the same reasoning still applies.

Hence, even if a Bitcoin miner re-invests some of her profits by buying additional mining equipment, we can expect this process to converge towards some bound on the overall amount of profits. While \claimtwo \ref{claim:upbound2} assumes that the adversary's mining power $\beta$ is constant, we can instead interpret $\beta$ as the maximal mining power that the adversary can obtain as a result of the re-investments process.

To take some exemplary figures, a miner may have an average cost of at least $98$ coins in terms of power consumption mining equipment erosion until she succeeds to solve a block, and the reward that she earns from each solved block is $100$ coins on average. If we examine Lemma~\ref{claim:upbound1}, we see that for increasingly greater values of $n$ it is indeed the case that $d$ can be larger while still satisfying the inequality  $\delta\frac{1}{p'}\ell \leq n-d$. However, the gains that the adversary achieves with a large $d$ can be disregarded due to Assumption~\ref{assume:maxprof}. This can be thought of as quite reasonable since the profit margin is modest ($100-98=2$ or less coins profit per solved block), as opposed to mining at a loss by discarding half of the solved blocks, which will lead to a quick deterioration of funds (a loss of $50-98=-48$ coins per solved block).
This can be taken to mean that 
the adversary does not earn a substantial enough amount of coins by including many ``detrimental'' blocks before the $n$ blocks that the beacon utilizes are created.
Thus, in a practical setting, it is likely that the adversary $\mc{A}_2$ of Interpretation~\ref{interpret:a1} is already as powerful as the general adversary $\mc{A}_3$ of Interpretation~\ref{interpret:a2}.

\smallskip
Let us provide a counterexample that shows that an unreasonable profits structure implies that the adversary is powerful enough to defeat any beacon protocol.

\begin{example}\label{example:profits}
\normalfont Consider an unrealistic setting where the adversary's profit margin per block is constant during any arbitrarily long period of time. Hence, in particular, a condition of the form $\exists c\ \forall n: \maxcoins(t,n)\leq c\cdot t$ does not hold. In this setting, it is impossible to have a beacon that outputs a bit whose statistical distance from uniform is arbitrarily close to $0$, even in the forkless model $\nofork$, and even if the adversary has a finite budget that prevents her from operating at a loss indefinitely. To see this, suppose for example that the reward per block is $x=60$ coins, the adversary has $p=\nicefrac{1}{10}$ of the mining power, and attempting to produce each block costs the adversary $y_p=5$ coins. Thus, the profit margin is $z_p=60\cdot\nicefrac{1}{10}-5=1$ coin per block. On the other hand, operating at a loss by discarding all ``detrimental'' blocks implies that the adversary has a cost of $w_p=60\cdot\nicefrac{1}{19}-5>60\cdot\nicefrac{1}{20}-5=-2$ coins per block on average. Therefore, one straightforward strategy for the adversary in this case is to operate honestly during the first $\nicefrac{n}{2}$ blocks, and discard all of her ``detrimental'' blocks in the next $\nicefrac{n}{4}$ blocks. Since $w > -2z$, w.h.p. the adversary will earn enough coins during the first $\nicefrac{n}{2}$ blocks so that she would not run out of funds while operating at a loss in the following $\nicefrac{n}{4}$ blocks. The beacon protocol should thus be secure against a non-oblivious symbol-fixing source that controls $\approx \nicefrac{n}{40}$ of the locations. However, our lower bound in Section \ref{sec:lowerbound} shows that an adversary who controls a constant fraction $\widehat{p}=\nicefrac{1}{40}$ of the locations can bias the output bit of the beacon by at least an $\nicefrac{1}{6}\cdot \widehat{p}=\nicefrac{1}{240}$ amount.
\end{example}

In fact, the above example implies that any constant production cost $y_p$ and positive profit margin $z_p$ make it impossible to achieve an arbitrarily small bias. In the real world it may be possible to have an investment that generates constant profits in nominal terms, though one should factor inflation-adjusted prices in that case.

\def\hybprot{\pi_{\text hyb}}
\def\hybadv{\mc{A}_{\text hyb}}
\def\com{\mathsf{com}}
\def\fcompen{\mathcal{F}_{\mathrm{com}}^\star}

\section{Hybrid protocol}\label{sec:hybrid}
We now consider a protocol that combines reliance on designated parties and randomness extraction from the Bitcoin public ledger. In this setting, the designated parties are hopefully reputable and honest, and thus the goal is to have a protocol that makes use of the designated parties in order to output a bit that has even less bias when compared to the $\bprot$ protocol.

\smallskip
An ideal objective for a hybrid protocol is of the following form:
\begin{itemize}
\item If the number of designated parties who are honest is above some threshold, then the output bit has less bias than that of $\bprot$.
\item Otherwise, the bias of the output bit is not worse than that of $\bprot$.
\end{itemize}
Unfortunately, this objective is bound to fail, barring additional assumptions. To see that, consider for example an adversary who corrupts {\em all} of the designated parties. If this adversary has $\beta$ mining power, then she should get some advantage relative to an adversary with $\beta$ mining power who tries to defeat $\bprot$, because the hybrid protocol takes into account the actions of the designated parties. Thus, we present a protocol that does not necessarily accomplish the ideal goal. 

In a similar fashion to other blockchain-based protocols that impose fairness \cite{SP:ADMM14,C:BenKum14,CCS:KumBen14,EC:KZZ15,SP:KMSWP15}, our hybrid protocol makes use of a {\em commitment with penalty} functionality $\fcompen$. The $\fcompen$ functionality allows a party $P_i$ to lock $q$ coins of hers, so that she can take back possession of these $q$ coins if and only if she reveals a certain decommitment before a time limit $\tau$. In case the limit $\tau$ passed and $P_i$ did not reveal the decommitment, the $q$ coins are destroyed. Due to the \texttt{OP\_CHECKLOCKTIMEVERIFY} \cite{MISC:CLTV} softfork, the $\fcompen$ functionality can be realized in Bitcoin in a straightforward manner. See Code~\ref{code:hiCLTV} for a high-level pseudocode of $\fcompen$, and Code~\ref{code:lowCLTV} for the actual Bitcoin script that corresponds to it.

\begin{algorithm}
\floatname{algorithm}{Code}
\caption{~~~Pseudocode of CLTV-based $\fcompen$}
\smallskip
\begin{algorithmic}[1]
\If {block\# $>$ $\tau$}
\State\Return {False}\Comment{the $q$ coins are unspendable}
\Else
\State $P_i$ can spend the $q$ coins by
\State \qquad signing with $sk_i$\Comment{the public key $pk_i$ of $P_i$ is hardcoded}
\State \qquad \textbf{and}
\State \qquad supplying $w$ such that $\hash(w)=c_i$\Comment{the commitment $c_i$ is hardcoded}
\EndIf
\end{algorithmic}
\label{code:hiCLTV}
\end{algorithm}

\begin{algorithm}
\floatname{algorithm}{Code}
\caption{~~~Bitcoin script of CLTV-based $\fcompen$}
\smallskip
\texttt{\ <$\tau$> CHECKLOCKTIMEVERIFY IF HASH256 <$c_i$> EQUALVERIFY <$pk_i$> CHECKSIGVERIFY ENDIF}
\label{code:lowCLTV}
\end{algorithm}

The hybrid protocol is presented in Figure \ref{fig:hyb}. The index $u_1$ specifies an agreed upon starting block $B_{u_1}$. As we discuss below, a sensible choice for the function $f$ can be $f(x_1,x_2,\ldots,x_m)=\mathsf{majority}(x_1,x_2,\ldots,x_m)$ or $f(x_1,x_2,\ldots,x_m)=x_1\oplus x_2\oplus\cdots\oplus x_m$. The parameter $t$ specifies a segment length in which each designated party $P_i\in\{P_1,P_2,\ldots,P_m\}$ is required to react, as otherwise $P_i$ is regarded as a corrupt party who aborted. The purpose of the parameter $k$ is to handle the possible re-organizations of the chain history by ensuring w.h.p. an agreement on a common prefix among the honest miners, in the beacon phase as well as in the commitments/decommitments phases. 

\begin{figure}[btb]
\begin{minipage}{\textwidth}
\begin{framed}
{
\small
\begin{center} Protocol $\hybprot$ \end{center}
\begin{itemize}
\item For each round $j\in[r]$:
\begin{itemize}
\item Denote $u'_j\triangleq u_j+t+k,\ u''_j\triangleq u'_j+n+k$.
\item \label{hybitem:initfair} {\em Commitments phase.} Before the block $B_{u_j}$ is extended by $t$ extra blocks:
\begin{itemize}
\item For each $i\in[m]$: $P_i$ picks a uniform random bit $d_i$, and invokes $\fcompen$ to broadcast a transaction that locks $q$ coins of hers, specifies a commitment $c_i=\com(d_i)$, and specifies $u''_j$ as the limit.
\end{itemize}
\item \label{hybitem:beacon} {\em Beacon phase.}
\begin{itemize}
\item Invoke $\bprot$ by waiting for $B_{u'_j}$ to be extended by $n+k$ new blocks $B_{u'_j+1},B_{u'_j+2},\ldots,B_{u''_j}$, thus deriving a bit $b$ from $B_{u'_j+1},B_{u'_j+2},\ldots, B_{u'_j+n}$.
\end{itemize}
\item \label{hybitem:endfair} {\em Decommitments phase.}
\begin{itemize}
\item For each $i\in[m]$: $P_i$ reclaims her $q$ coins by posting the decommitment $d_i$ in any of the blocks $B_{u''_j+1},B_{u''_j+2}\ldots,B_{u''_j+t}$.
\item Wait until the blockchain reaches block $B_{u''_j+t+k}$.
\item For each $i\in[m]$:
\begin{itemize}
\renewcommand{\labelitemiv}{$\star$}
\item If $P_i$ did not forfeit prior to round $j$, and posted her decommitment $d_i$ until the block $B_{u''_j+t}$, then $d'_i=d_i$.
\item Otherwise, $P_i$ forfeits and $d'_i=0$.
\end{itemize}
\item Set $s_j=b \oplus f(d'_1,d'_2,\ldots,d'_m)$, and set $u_{j+1}=u''_j+1$.
\end{itemize}
\end{itemize}
\item Output the bit $\mathsf{majority}(s_1,s_2,\ldots,s_r)$.
\end{itemize}
}
\end{framed}
\end{minipage}
\caption{The Hybrid beacon protocol (think $r=1,f=\mathsf{majority}$).}\label{fig:hyb}
\end{figure}
Let us examine the case of $r=1$ and $f(x_1,x_2,\ldots,x_m)=\mathsf{majority}(x_1,x_2,\ldots,x_m)$ first. Suppose for the moment that the adversary $\hybadv$ does not possess any mining power. Suppose that $m$ is odd, and denote $h\triangleq\frac{m+1}{2}$. In case $\hybadv$ corrupts a minimal majority of the parties, say $P_1,P_2,\ldots,P_h$, she can choose $d_1=d_2=\cdots=d_h=1$ in the commitments phase. 
Then, $\hybadv$ will inspect the bit $b$ that was derived in beacon phase, and will either decommit all of the bits $d_1,d_2,\ldots,d_h$ so that the output will be $b\oplus 1$, or withhold $h'$ of the $h$ bits to cause the output to be $b$ at the cost of losing $h'\cdot q$ coins. We thus see that when $\hybadv$ corrupts a majority of the designated parties, she has complete control over the output. Note that $1\leq h' \leq h$, with $h'=1$ in the case that all of the $m-h$ honest parties committed to $0$, and $h'=h$ in the case that all of the $m-h$ honest parties committed to $1$. Since the probability that $\hybadv$ would wish to flip $b$ is $\nicefrac{1}{2}$, it follows that $\hybadv$ has to pay a penalty of more than $\frac{1}{2}q$ coins in expectation.\ariel{more than --$>$ at least, or alternatively explain why h' will be larger than 1 sometimes}

By contrast, consider a protocol that consists of only the first and third phases, i.e., without employing the beacon as the intermediate phase. An adversary who corrupts the majority of the designated parties has complete control over the output, without paying any penalty. This is because the adversary can simply commit to the output bit that she desires.

\smallskip

\ariel{`To elaborate' bad start for sentence, as it sounds like you are expanding on the scenario where there was no intermediate beacon phase}
The above observations imply that when 
$\hybadv$ corrupts at least one (and at most $h$) of the designated parties, she can then either
\begin{itemize}
\item Expend resources during the beacon phase in an attempt to influence $b$ to be the opposite of her desired output, thus making it less likely that she will need to withhold some of her decommitments and lose $q$ or more coins.
\ariel{Might want to delete `to be the opposite of her desired output', this is assuming the comitted values give the opposite of her desired output} 
\item Or, remain idle during the beacon phase, and lose at least $q$ coins in case she wishes to flip $b$ during the decommitments phase.
\end{itemize}

By assuming that $\hybadv$ corrupts a minority of the designated parties, we can make the following claim.
\begin{claim}
Suppose that $\hybprot$ is parameterized according to $r=1$ and $f(x_1,x_2,\ldots,x_m)=\mathsf{majority}(x_1,x_2,\ldots,x_m)$. Suppose that the adversary $\hybadv$ purchased her mining power with $R_1$ \newpage coins, and has an additional reserve of $R_2$ coins. Assume that the following conditions hold:
\begin{enumerate}
\item $R_2+\maxcoins(R_1,n+2t+3k)<q$.
\item $\hybadv$ does not go under budget.
\item $\hybadv$ corrupts $m'<\frac{m}{2}$ of the designated parties.
\end{enumerate}
Then, the output of $\hybprot$ is less biased than the output of $\bprot$.
\end{claim}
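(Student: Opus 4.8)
Since $r=1$, the output of $\hybprot$ is simply $s_1=b\oplus\mathsf{majority}(d'_1,\ldots,d'_m)$, where $b$ is the bit produced by the embedded beacon phase (an invocation of $\bprot$ with parameters $n,k$) and $d'_1,\ldots,d'_m$ are the effective decommitted bits. The first thing I would establish is that, under conditions~(1)--(2), the adversary $\hybadv$ never forfeits. For $r=1$ the whole execution occupies the blocks from $B_{u_1}$ up to $B_{u_1+n+2t+3k}$, so by Assumption~\ref{assume:maxprof} the total supply of coins $\hybadv$ can muster over the execution is at most $R_2+\maxcoins(R_1,n+2t+3k)$, which by condition~(1) is strictly less than $q$. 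A single forfeit costs $\hybadv$ a $q$-coin deposit she would otherwise reclaim, and condition~(2) forbids her from operating below budget; hence she cannot afford even one forfeit. Consequently $d'_i=d_i$ for every $i\in[m]$, and writing $M\triangleq\mathsf{majority}(d'_1,\ldots,d'_m)=\mathsf{majority}(d_1,\ldots,d_m)$, the output is $s_1=b\oplus M$.

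Next I would isolate the randomness in $M$ and argue that $b$ and $M$ are independent. By condition~(3), among $d_1,\ldots,d_m$ at least $m-m'>m/2$ are the bits of honest designated parties; these are uniform, i.i.d., and — because the commitments are hiding and are posted \emph{before} the beacon phase — independent of the corrupt parties' committed bits and of the beacon randomness. Fix the vector $\vec c$ of the $m'$ corrupt bits (determined, as a function of $\hybadv$'s choices, by the end of the commitments phase, and set to $0$ for any corrupt party who never committed). Conditioned on $\vec c$, the bit $M=\mathsf{majority}(\vec c,\vec h)$ is a function only of the honest bits $\vec h$, whereas $b$ is a function of the beacon randomness and of $\hybadv$'s mining behaviour, which may depend on $\vec c$ (and on the public commitments) but not on $\vec h$, since $\vec h$ is revealed only afterwards; hence $b\perp M$ conditioned on $\vec c$.

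The last step is the bias bookkeeping. Conditioned on $\vec c$, the bit $M$ is the majority of the $m'$ fixed coordinates $\vec c$ together with $m-m'$ uniform coordinates, i.e. of a non-oblivious $(m,m-m')$ bit-fixing source; \lemmaref{lemma:majextr} then gives $\bigl|\Pr(M=1\mid\vec c)-\tfrac12\bigr|\le\beta_2$ with $\beta_2=O(m'/\sqrt m)$ when $m$ is odd and large relative to $m'$, and in any case $\beta_2\le\tfrac12-2^{-(m-m')}<\tfrac12$, which is the only bound condition~(3) guarantees in general. Meanwhile $\bigl|\Pr(b=1\mid\vec c)-\tfrac12\bigr|\le\beta^{*}$, where $\beta^{*}$ is the worst-case bias of $\bprot$ with parameters $n,k$ against an adversary holding the beacon-phase budget of $\hybadv$; this follows from \lemmaref{claim:upbound2} and Interpretation~\ref{interpret:a2}, since that budget is at most $R_2+\maxcoins(R_1,n+2t+3k)$. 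Because $b\perp M\mid\vec c$, the xor of two independent biased bits satisfies $\bigl|\Pr(s_1=1\mid\vec c)-\tfrac12\bigr|=2\bigl|\Pr(b=1\mid\vec c)-\tfrac12\bigr|\cdot\bigl|\Pr(M=1\mid\vec c)-\tfrac12\bigr|\le 2\beta^{*}\beta_2$. Averaging over $\vec c$ and using $\mathsf{majority}(s_1)=s_1$ yields $\bigl|\Pr(\text{output}=1)-\tfrac12\bigr|\le 2\beta^{*}\beta_2<\beta^{*}$, because $2\beta_2<1$; that is, $\hybprot$ is strictly less biased than $\bprot$.

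\noindent\textbf{Main obstacle.} I expect the delicate point to be the first step — pinning down precisely why conditions~(1)--(2) forbid a forfeit — since it hinges on the coin-flow semantics of $\fcompen$ (which coins $\hybadv$ controls: her reserve, her block rewards, and the deposits locked on behalf of corrupted parties) and on reading ``$\hybadv$ does not go under budget'' as forbidding any net expenditure exceeding her $<q$-coin horizon budget, so that foregoing a single $q$-coin deposit is simply unaffordable. The second subtle point is cleanly identifying the bias of the embedded beacon bit with the standalone worst-case bias of $\bprot$: one must invoke the Interpretation~\ref{interpret:a2} machinery with the enlarged horizon $n+2t+3k$ and note that letting $\hybadv$'s mining strategy depend on the (hiding) public commitments gives her nothing extra. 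The remaining steps — the independence argument and the xor-of-biases identity — are routine.
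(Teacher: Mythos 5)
Your proposal is correct and follows essentially the same line of argument as the paper, which states it much more tersely: conditions~(1)--(2) rule out any forfeit, so the adversary can only act through the beacon phase, and condition~(3) leaves her "uncertain about the direction" in which to bias $b$ because the honest parties' hidden commitments will later perturb the output. What you have done is simply make that informal third step precise, by conditioning on the corrupt parties' committed bits $\vec c$, invoking the hiding property to get $b\perp M\mid\vec c$, bounding $\bigl|\Pr(M=1\mid\vec c)-\tfrac12\bigr|$ strictly below $\tfrac12$ for a non-oblivious $(m,m-m')$ bit-fixing source with $m'<m/2$, and then applying the identity $\bigl|\Pr(b\oplus M=1)-\tfrac12\bigr|=2\bigl|\Pr(b=1)-\tfrac12\bigr|\cdot\bigl|\Pr(M=1)-\tfrac12\bigr|$ together with the triangle inequality over $\vec c$. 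This is a faithful and more rigorous rendering of the paper's argument rather than a genuinely different route; the only wrinkle worth noting is that "strictly less biased" formally requires $\beta^{*}>0$, which holds for any nontrivial adversary in $\bprot$.
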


\begin{proof}
The first condition implies that $\hybadv$ does not have enough funds to withhold any decommitment, unless she goes under budget. Therefore, the second condition implies that $\hybadv$ can influence the output only during the beacon phase. However, the third condition implies that $\hybadv$ will have an uncertainty regarding the direction in which to influence the beacon, because 
the output may or may not become flipped when the other $m-m'$ designated parties reveal their input bits in the decommitments phase.
\ariel{Was initially confused by phrasing, changed it a bit (iddo: in comments now, will tweak soon)}
\end{proof}

Let us now consider $r=1$ and $f(x_1,x_2,\ldots,x_m)=x_1\oplus x_2\oplus\cdots\oplus x_m$. If at least one of the designated parties is honest, then it is useless for $\hybadv$ to try to influence the beacon phase, and instead she must lose the $q$ coins penalty if she wishes to influence the output bit. Thus, $\hybadv$ may be a more powerful adversary in the case of $f=\mathsf{majority}$, as then the mining power of $\hybadv$ can be effective even if she does not corrupt all of the designated parties. On the other hand, when $f=\mathsf{majority}$ and $\hybadv$ corrupts only one designated party, she may fail to influence the output if she withholds the decommitment. Hence, $f=\mathsf{majority}$ and $f=\mathsf{xor}$ are a priori incomparable. Under the reasonable assumption that the cost of influencing the beacon is high, it should be better to use $f=\mathsf{majority}$ in order to make it difficult for $\hybadv$ to influence $f$. In the case that the cost of influencing the beacon is low, it might be better to use $f=\mathsf{xor}$ in order to make it less likely that the distribution of the output bit is affected by the beacon phase (this can happen only if $\hybadv$ corrupts all the designated parties).
\ariel{wip: (Recall this can happen only if $\hybadv$ corrupts all the designated parties, as in the case $f=\mathsf{xor}$ one honest designated party suffices for a random output bit regargless of the beacon output).}
\ariel{I think there's a mix up in last sentence here. output is affected by beacon phase when all parites are corrupt}

\smallskip
Let us also consider the case of incorporating more rounds in $\hybprot$.
\begin{claim}
Suppose that $\hybprot$ is parameterized according to an odd $r > 1$ and $f(x_1,x_2,\ldots,x_m)=x_1\oplus x_2\oplus\cdots\oplus x_m$. Suppose that the adversary $\hybadv$ purchased her mining power with $R_1$ coins, and has an additional reserve of $R_2$ coins. Let $\eps>0$ be arbitrary. Denote $\ell\triangleq \floor{2\eps\frac{\pi}{\euler}\sqrt{r-\sqrt{r}}}-1$. Assume that the following conditions hold:
\begin{enumerate}
\item $R_2+\maxcoins(R_1,r(n+2t+3k))\leq\ell q$. 
\item $\hybadv$ does not go under budget.
\item $\hybadv$ corrupts $m'\leq m-1$ of the designated parties.
\end{enumerate}
Then, the protocol $\hybprot$ outputs a bit whose statistical distance from a uniform random bit is at most $\eps$.
\end{claim}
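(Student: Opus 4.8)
The plan is to analyse the $r$ per-round bits $s_1,\ldots,s_r$ that $\hybprot$ produces and to show that, up to blockchain-layer error terms, $(s_1,\ldots,s_r)$ is an $(r,\,r-\ell)$ non-oblivious bit-fixing source: the adversary $\hybadv$ can force the value of $s_j$ in at most $\ell$ of the $r$ rounds, and in every other round $s_j$ is a fresh uniform bit. The claimed bound then follows by applying \lemmaref{lemma:majextr} to these $r$ bits.

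The crux is a per-round statement. Condition~3 gives $m'\le m-1$, so in every round $j$ some designated party $P_{i_0}$ is honest; $P_{i_0}$ draws a fresh uniform bit $d_{i_0}$ and locks it inside $\fcompen$ during the commitments phase, which precedes the beacon phase of round $j$. By the hiding property of $\com$, the beacon bit $b$ of round~$j$ is then independent of $d_{i_0}$ no matter how $\hybadv$ uses her mining power during the beacon phase; and since $P_{i_0}$ posts her decommitment within her $t$-block window plus the $k$ confirmation blocks, except with probability $e^{-\Omega(k)}$ (by the common-prefix and chain-quality guarantees of \cite{EC:GKL15}, treated exactly as the analogous terms in \lemmaref{claim:upbound2}) the value $d_{i_0}$ enters $f(d'_1,\ldots,d'_m)=\bigoplus_i d'_i$. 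Hence, in any round in which $\hybadv$ withholds \emph{no} corrupted-party decommitment, $s_j=b\oplus\bigoplus_i d'_i$ is uniform and independent of the other rounds — the fresh bit $d_{i_0}$ re-randomises $s_j$ regardless of everything $\hybadv$ has done, so, in particular, any effort spent to bias $b$ is wasted (and the argument does not rely on \lemmaref{claim:upbound2}'s guarantee for $\bprot$). By binding of $\com$, the only way $\hybadv$ can affect $s_j$ is to forfeit the decommitment of a corrupted party that committed to~$1$ — which she can always arrange to have available — at a cost of $q$ coins; this stays true even under rushing in the decommitments phase, since flipping the parity of $\bigoplus_i d'_i$ requires a withholding no matter which honest decommitments $\hybadv$ has already observed.

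Next I would do the budget bookkeeping. The $r$ rounds of $\hybprot$ occupy at most $r(n+2t+3k)$ consecutive blocks, so by Assumption~\ref{assume:maxprof} and monotonicity of $\maxcoins$ the block-reward profit $\hybadv$ can collect over the run is at most $\maxcoins(R_1,\,r(n+2t+3k))$; together with her reserve and condition~1, her total available funds over the run are at most $\ell q$. By condition~2 she never goes under budget, and influencing a round costs her at least $q$ coins by the previous paragraph, so she influences at most $\ell$ rounds; at least $r-\ell$ of $s_1,\ldots,s_r$ are therefore fresh uniform bits and the remaining $\le\ell$ are set by $\hybadv$ as a (possibly adaptive) function of the rest. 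This is exactly the structure \lemmaref{lemma:majextr} handles — choosing the fixed coordinates adaptively is no stronger than fixing them in advance, because a majority already decided by the revealed uniform bits cannot be reversed by setting $\le\ell$ further bits. Invoking the lemma with error parameter $\eps'=2\eps$ and (odd) input length $r$: since $\floor{\eps'\frac{\pi}{\euler}\sqrt{r-\sqrt r}}=\ell+1$, majority is an $\eps'/2=\eps$-extractor for an $(r,\,r-\ell)$ non-oblivious bit-fixing source, and having more than $r-\ell$ uniform coordinates only helps, so $\dtv(\mathsf{majority}(s_1,\ldots,s_r),U)\le\eps$. (We may assume $2\eps\le\euler/\pi$; for larger $\eps$ the statement is uninteresting and one can in any case apply the lemma with its best admissible error parameter.)

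The main obstacle is making the second paragraph watertight: one must rule out every way $\hybadv$ might influence a round for less than one forfeited penalty, in particular that (i) biasing $b$ is genuinely worthless because $b$ is XORed with the uniform, beacon-independent honest bit $d_{i_0}$; (ii) binding of $\com$ really does confine corrupted parties to ``decommit the committed value or forfeit $q$''; and (iii) rushing — deciding her own decommitments after seeing the honest ones — still does not let $\hybadv$ flip $f$ for free. A secondary point is the $e^{-\Omega(k)}$ blockchain-layer terms (censorship of the honest decommitment, common-prefix failures), which I would carry along exactly as in \lemmaref{claim:upbound2}: they shrink below any threshold as $k$ grows, or vanish in the idealised model used elsewhere in the paper, so the clean bound $\eps$ in the statement should be read with these negligible additive terms appended.
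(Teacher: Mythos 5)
Your proof takes the same route as the paper's: condition~3 (at least one honest designated party, whose bit is committed before and re-randomises $s_j$ via $f=\oplus$) forces a $q$-coin forfeit for every round $\hybadv$ wants to control, conditions~1--2 cap the number of controllable rounds at $\ell$ via budget bookkeeping, and \lemmaref{lemma:majextr} is then applied with error parameter $2\eps$ so that majority is an $\eps$-extractor for the resulting $(r,r-\ell)$ non-oblivious bit-fixing source, together with the same observation that an adaptive-quota adversary is no stronger than one who fixes her $\ell$ locations in advance. You are somewhat more careful than the paper in two respects, both to your credit: you make the per-round hiding/binding/rushing argument explicit (showing \emph{why} a non-forfeiting round is a fresh uniform bit), and you correctly flag that the stated clean bound of $\eps$ tacitly omits the $e^{-\Omega(k)}$ blockchain-layer terms (censorship of the honest decommitment, common-prefix failures) that the paper's own \lemmaref{claim:upbound2} carries explicitly.
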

\begin{proof}
The third condition implies that $\hybadv$ must lose $q$ coins for every round that she wishes to control. Hence, the first and second conditions imply that $\hybadv$ would not have enough funds to control more than $\ell$ rounds by the time that the protocol terminated. We can thus regard $\hybadv$ as having $\ell'\leq\ell$ locations that she controls, and the remaining $r-\ell'$ locations are uniform random. While the majority extractor is an $\eps$-extractor for $(r,r-\ell)$ non-oblivious bit-fixing source, an inspection of Lemma \ref{lemma:majextr} shows that the same holds even in the case of a somewhat more powerful adversary. This adversary is a a quota of $\ell$ locations, and upon seeing the inputs $x_1,x_2,\ldots,x_{i-1}$ the adversary is allowed to decide whether to take control over $x_i$ by decrementing her quota, or let $x_i$ be sampled as a uniform random bit. The relevant observation is that the event $E$ in Lemma \ref{lemma:majextr} has the same meaning both in the case of an adversary that plays at $\ell$ locations that are fixed in advance, and an adversary who chooses adaptively the location in which she plays. It therefore follows that the statistical distance between the output bit and a uniform random bit is $\eps$ or less.
\end{proof}
\ariel{This assumes she has 0 influence in other rounds and their output is completely uniform. Don't understand why}
As discussed above, the only reason for employing the beacon when $\hybprot$ is parameterized according to $f(x_1,x_2,\ldots,x_m)=x_1\oplus x_2\oplus\cdots\oplus x_m$ is to handle the possibility that the adversary $\hybadv$ corrupts all the designated parties.

\smallskip
Let us note that under the assumption that $\hybadv$ is able to corrupt significantly less than half of the designated parties, the hybrid protocol $\hybprot$ can be improved by replacing $f$ with the {\em iterated majority} extractor \cite{RC:BL90,SICOP:KZ07}. Specifically, when the number of non-random inputs is $\approx m^{0.63}$ or less, the iterated majority extractor is strictly better than the majority extractor. For example, if $\hybadv$ corrupts one of $m=9$ designated parties, she will have $\nicefrac{{8\choose 4}}{2^8}=0.273$ probability to withhold and flip the beacon bit in the case of the majority extractor, and $\nicefrac{1}{2}\cdot\nicefrac{1}{2}=0.25$ probability in the case of the iterated majority extractor (cf. \cite[Figure 1]{FC:BGM16}).

\smallskip
In comparison to $\bprot$, a disadvantage of $\hybprot$ is that the public verifiability aspect is diminished. That is, even if all the designated parties decommitted, it cannot be verified whether they would have also decommitted (as opposed to taking a loss of at least $q$ coins) in case the output of the beacon phase was $b\oplus 1$ instead of $b$.



\section{Practical considerations}\label{sec:practical}
For a beacon protocol that is dependent upon Bitcoin or similar systems, the real-world aspects of {\em proof of work} (PoW) based cryptocurrencies present some additional security concerns, as well as certain beneficial factors.

First, let us consider the additional potential risks. Our analysis assumed an adversary with $\beta$ mining power, such that the performance of the $\bprot$ protocol is better when $\beta$ is smaller. However, in the real world, Bitcoin miners delegate their power to centralized pools. We can thus regard the pool administrator as the adversary, as is it abnormal for a single pool to hold more than $\frac{1}{2}$ of the mining power. 
Moreover, the risk that an adversarial pool administrator (with a large $\beta$) poses is less profound than what one may think. The reason for that is that the pool administrator can influence $\bprot$ only by discarding ``detrimental'' blocks (cf. Definition~\ref{def:helpdet}) that are solved by the miners who delegate their power to the pool. When the miners who participate in the pool would notice that their solved blocks are not being used and thus the rewards are not being added to the pool's reserve, they are likely to switch to 
a competing pool.

On the other hand, the real-world behavior of the miners of crytocurrencies provides opportunities to strengthen the security of a beacon protocol. Since there are multiple cryptocurrency systems that enjoy some level of popularity, and these systems are based on {different} PoW hash functions, the beacon protocol can amplify its security by relying on multiple blockchains. The benefit in this stems from the observation that an adversary \adv who wishes to influence the beacon output would need invest in different kinds of mining equipment (or bribe the miners of different systems). That is, the amount of popularity that various cryptocurrency systems have can be harnessed to increase the overall security level of the beacon protocol. For example, let us make the following suppositions with regard to Bitcoin and Litecoin~\cite{MAN:LTC}:
\begin{enumerate}
\item Both Bitcoin and Litecoin have a similar market entry cost, implying that the profit margin for either of them is low.
\item The purchasing power of the Bitcoin currency is $c_1$ times greater than that of the Litecoin currency, where $c_1\geq 1$ because Bitcoin enjoys a greater level of adoption among the population.
\item The level of PoW based security that Bitcoin has is $c_2$ times greater than that of Litecoin, where $c_2\geq 1$ because Bitcoin is more popular and thus more miners have a vested interest to keep it secure.
\end{enumerate}
Notice that if \adv needs to invest $t$ coins to acquire $p<1$ 
of the Bitcoin mining power, then the third supposition implies that \adv obtains $\approx c_2 \cdot p$ of the Litecoin mining power by investing $t$ coins into Litecoin mining equipment. Also note that $c_1\neq c_2$ is possible at least over the short term, though it is reasonable to assume that the adoption level would drive the security level over the long term. To estimate $c_2$, one should compare the cost (and availability) of mining equipment to the current PoW difficulty target of the Bitcoin and Litecoin networks.

Consider a protocol that uses $\mathsf{majority}(B_1,B_2,\ldots,B_{m},B'_1,B'_2,\ldots,B'_{w})$ to derives the output bit of the beacon, where $B_1,\ldots,B_{m}$ are Bitcoin blocks, $B'_1,\ldots,B'_{w}$ are Litecoin blocks, and the timestamps of $B_1$ and $B'_1$ are approximately the same. What would be a good choice for the parameter $w$?

Let $x$ denote the value of each Bitcoin block reward. Due to our second supposition, the value of each Litecoin block reward is $\nicefrac{x}{c_1}$. The adversary \adv is expected to solve $p\cdot m$ Bitcoin blocks, and thus lose $\approx x\cdot\frac{pm}{2}$ in value as she tries to influence the output bit. Also, \adv is expected to solve $c_2\cdot p\cdot w$ Litecoin blocks, and thus lose  $\approx \frac{x}{c_1}\cdot\frac{c_2 pw}{2}$ in value.
Given our first supposition, let us simplify further by assuming that the potential revenues that \adv can earn while creating blocks either among $B_1,\ldots,B_m$ or among $B'_1,\ldots,B'_w$ are relatively insignificant.
Therefore, in the case that \adv invested the same amount into Bitcoin mining equipment and Litecoin mining equipment, $w=\frac{c_1}{c_2} m$ implies that it is about as costly for \adv to influence the Bitcoin portion and the Litecoin portion of the inputs to the majority function.
E.g., if the price of Litecoin is $100$ times smaller than that of Bitcoin, and the security of the Litecoin network is $50$ times smaller than that of Bitcoin, then $w=2m$ is a rational choice.

The running time of this protocol is comparable to $\bprot$ with $n=m+\nicefrac{w}{4}$, because the blocks interval of Litecoin blocks is $4$ times shorter than that of Bitcoin. Hence, the advantage of this beacon protocol over $\bprot$ is that \adv invested $2t$ coins to try to influence the $m+w$ blocks, while she would have needed to invest only $t$ coins to try to influence $m+\nicefrac{w}{4}$ Bitcoin blocks. Furthermore, the need to acquire different kinds of mining equipment (\texttt{SHA256} ASIC for Bitcoin and \texttt{scrypt} ASIC for Litecoin) makes the task of the adversary more demanding.



\bigskip
\bigskip
\noindent{\bf Acknowledgments.} The first author thanks Aggelos Kiayias for many useful discussions. We also thank Hong-Sheng Zhou and Vassilis Zikas for useful discussions.


\ignore{
{
\bibliographystyle{plain}
\bibliography{beacon,crypto/abbrev3,crypto/crypto,crypto/missing}
}
}



\end{document}

\appendix
\newpage
\input{construction}

\end{document}